\documentclass[twocolumn,10pt]{article}
\usepackage[T1]{fontenc}
\usepackage[utf8]{inputenc}
\usepackage{mathptmx}
\usepackage[scaled=.90]{helvet}
\usepackage{courier}

\usepackage[a4paper,top=0.6in,bottom=0.75in,left=0.6in,right=0.6in,includefoot]{geometry}

\usepackage{graphicx}
\usepackage{xcolor}
\usepackage{cuted}
\usepackage{microtype}
\usepackage{parskip}
\usepackage{tcolorbox}
\tcbuselibrary{skins,breakable}
\usepackage{academicons}
\usepackage{fontawesome5}
\usepackage{hyperref}
\usepackage{lipsum}
\usepackage{lastpage}
\usepackage{tikz}
\usetikzlibrary{shapes.geometric,arrows.meta,positioning,fit,backgrounds,calc}
\usepackage{booktabs}
\usepackage{tabularx}
\usepackage{caption}
\usepackage{enumitem}
\usepackage{fancyhdr}
\usepackage{titlesec}
\usepackage{float}
\usepackage[ruled,noline]{algorithm2e}
\usepackage{wrapfig}
\usepackage{amsmath, amssymb}
\usepackage{mathtools}
\usepackage{multicol}
\usepackage{multirow}
\usepackage{amsthm}

\newtheorem{proposition}{Proposition}

\newtheorem{definition}{Definition}

{\theoremstyle{remark}\newtheorem{remark}{Remark}}
{\theoremstyle{remark}}

\usepackage{xcolor}
\usepackage{tabularx}
\usepackage{longtable}
\newif\ifrevision
 \revisionfalse   

\newif\ifrevshowtag
\revshowtagtrue

\definecolor{revtag}{HTML}{000000}   

\definecolor{revtext}{HTML}{0000FF}  

\newcommand{\revpickcolor}[1]{revtext}

\newcounter{revision}
\makeatletter

\newcommand{\rvStorePage}[2]{%
  \@ifundefined{revpages@#1}{%
    \expandafter\gdef\csname revpages@#1\endcsname{#2}%
  }{%
    \expandafter\g@addto@macro\csname revpages@#1\endcsname{,#2}%
  }%
}

\newcommand{\rvUniquePrintPages}[1]{%
  \begingroup
  \def\rvSeen{}
  \def\rvOut{}
  \edef\rvList{#1}%
  \@for\rvP:=\rvList\do{%
    \edef\rvNeed{,\rvP,}%
    \edef\rvHay{,\rvSeen,}%
    \in@\rvNeed\rvHay
    \ifin@
    \else
      \ifx\rvOut\@empty
        \xdef\rvOut{\rvP}%
      \else
        \xdef\rvOut{\rvOut, \rvP}%
      \fi
      \ifx\rvSeen\@empty
        \xdef\rvSeen{\rvP}%
      \else
        \xdef\rvSeen{\rvSeen,\rvP}%
      \fi
    \fi
  }%
  \rvOut%
  \endgroup
}

\newcommand{\rvGetPages}[1]{%
  \@ifundefined{revpages@#1}{--}{%
    \rvUniquePrintPages{\csname revpages@#1\endcsname}%
  }%
}

\newcommand{\revC}[3]{%
  \ifrevision
    \stepcounter{revision}%
    \edef\rvkey{R#1-C#2}%
    \protected@write\@auxout{}{%
      \string\rvStorePage{\rvkey}{\thepage}%
    }%
    {\begingroup
      \edef\rvcol{\revpickcolor{#1}}%
      \textcolor{\rvcol}{#3}%
    \endgroup}%
    \ifrevshowtag
      \raisebox{0.45ex}{%
        \textcolor{revtag}{\scriptsize\space Reviewer~#1~comment~(#2)}%
      }%
    \fi
  \else
    #3%
  \fi
}

\newcommand{\RVRow}[4]{%
  \ifrevision
    \textbf{#2} &
    #3 &
    #4 &
    \rvGetPages{R#1-C#2}%
    \\ \hline
  \fi
}

\makeatother

\usepackage[backend=biber,style=ieee]{biblatex}
\usepackage{stfloats}

\graphicspath{{results/figures/}{figs/}}

\definecolor{primary}{HTML}{003049}
\definecolor{accent}{HTML}{F77F00}
\definecolor{textgray}{HTML}{2F2F2F}
\definecolor{dividergray}{HTML}{DADADA}

\hypersetup{
  colorlinks=true,
  linkcolor=primary,
  citecolor=primary,
  urlcolor=primary
}

\titleformat{\section}
  {\color{primary}\sffamily\Large\bfseries}
  {\thesection}{1em}{}[\vspace{0.2em}\titlerule]

\titleformat{\subsection}
  {\color{primary}\sffamily\large\bfseries}
  {\thesubsection}{1em}{}

\titlespacing*{\section}{0pt}{6pt}{3pt}
\titlespacing*{\subsection}{0pt}{4pt}{2pt}

\setlist[itemize]{left=1.2em, itemsep=2pt, topsep=2pt, parsep=0pt}
\setlist[enumerate]{left=1.2em, itemsep=2pt, topsep=2pt, parsep=0pt}

\tcbset{
  abstractstyle/.style={
    enhanced,
    colback=white,
    colframe=primary,
    boxrule=1pt,
    fonttitle=\bfseries\sffamily\large,
    left=6mm, right=6mm, top=2mm, bottom=2mm,
    width=\textwidth,
    boxsep=4pt,
    breakable
  },
  biobox/.style={
    colback=white,
    colframe=primary,
    arc=1.5mm,
    boxrule=0.5pt,
    drop shadow southeast,
    left=3mm, right=3mm, top=1mm, bottom=1mm,
    width=\linewidth,
    before skip=6pt, after skip=6pt,
    breakable
  }
}

\SetAlgoNlRelativeSize{-1}
\SetAlCapNameFnt{\sffamily\bfseries\small}
\SetAlCapFnt{\sffamily\small}

\makeatletter
\def\@maketitle{%
  \begin{center}
    {\fontsize{22pt}{18pt}\selectfont \bfseries \textcolor{primary}{Multi-Level Distributional Entropy\\[4pt] for Explainable Network Intrusion Detection}}\\[1ex]
    {\normalsize
      Mohamed Aly Bouke\,%
      \raisebox{0.6ex}{\href{https://orcid.org/0000-0003-3264-601X}{\includegraphics[height=1.4ex]{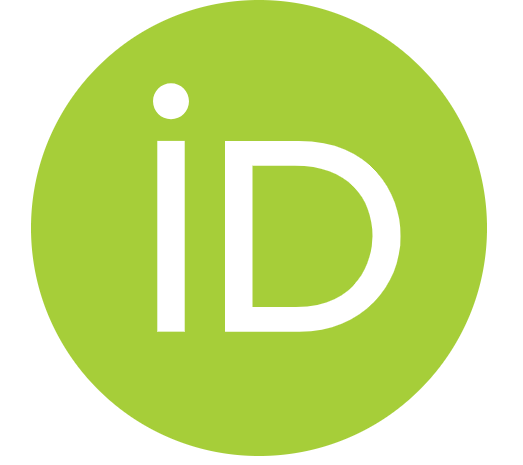}}}%
      \,\raisebox{0.6ex}{\href{mailto:alybouke@mmu.edu.my}{\textcolor{gray}{\scriptsize\faEnvelope}}}%
      \raisebox{1.3ex}{\scriptsize1,*},\enspace
      Md Shohel Sayeed\,%
      \raisebox{0.6ex}{\href{https://orcid.org/0000-0002-0052-4870}{\includegraphics[height=1.4ex]{orcid.png}}}%
      \,\raisebox{0.6ex}{\href{mailto:shohel.sayeed@mmu.edu.my}{\textcolor{gray}{\scriptsize\faEnvelope}}}%
      \raisebox{1.3ex}{\scriptsize1},\enspace
      Swee-Huay Heng\,%
      \raisebox{0.6ex}{\href{https://orcid.org/0000-0003-3627-2131}{\includegraphics[height=1.4ex]{orcid.png}}}%
      \,\raisebox{0.6ex}{\href{mailto:shheng@mmu.edu.my}{\textcolor{gray}{\scriptsize\faEnvelope}}}%
      \raisebox{1.3ex}{\scriptsize1},\enspace
      Azizol Abdullah\,%
      \raisebox{0.6ex}{\href{https://orcid.org/0000-0001-8321-9259}{\includegraphics[height=1.4ex]{orcid.png}}}%
      \,\raisebox{0.6ex}{\href{mailto:azizol@upm.edu.my}{\textcolor{gray}{\scriptsize\faEnvelope}}}%
      \raisebox{1.3ex}{\scriptsize2},\enspace
      Mohamed Othman\,%
      \raisebox{0.6ex}{\href{https://orcid.org/0000-0002-5124-5759}{\includegraphics[height=1.4ex]{orcid.png}}}%
      \,\raisebox{0.6ex}{\href{mailto:mothman@upm.edu.my}{\textcolor{gray}{\scriptsize\faEnvelope}}}%
      \raisebox{1.3ex}{\scriptsize2,3}
    }\\[0.8ex]
    {\footnotesize
      \textsuperscript{1}Centre for Intelligent Cloud Computing, CoE for Advanced Cloud,\par
      Faculty of Information Science and Technology,\par
      Multimedia University, Jalan Ayer Keroh Lama, Bukit Beruang, 75450, Melaka, Malaysia\\[0.3ex]
      \textsuperscript{2}Department of Communication Technology and Networking,\par
      Faculty of Computer Science and Information Technology,\par
      Universiti Putra Malaysia, Serdang 43400, Malaysia\\[0.3ex]
      \textsuperscript{3}Laboratory of Computational Science and Mathematical Physics,\par
      Institute for Mathematical Research,\par
      Universiti Putra Malaysia, Serdang, Malaysia
    }\\[0.5ex]
    {\scriptsize \texttt{*alybouke@mmu.edu.my, bouke@ieee.org}}\\[1ex]
    {\scriptsize \textit{Research Article}, \today}
  \end{center}
}
\renewcommand{\maketitle}{%
  \twocolumn[
    \color{textgray}
    \@maketitle
    \vspace{-1.2em}
  ]
}
\makeatother

\begin{document}
\maketitle

\begin{strip}
  \begin{center}
    \begin{tcolorbox}[abstractstyle, title=Abstract]
      \normalsize
      Machine learning network intrusion detection systems (IDS) rely on aggregate flow statistics that discard distributional structure, while established entropy measures require raw packet sequences unavailable in pre-aggregated flow datasets. We propose Multi-Level Distributional Entropy (MDE), an analytical framework that derives interpretable entropy features directly from flow-level summary statistics at three levels: within-flow Gaussian differential entropy, cross-directional Jensen-Shannon divergence (JSD), and Transmission Control Protocol (TCP) flag-pattern Shannon entropy, without raw packet access or training data. Across four benchmarks (NSL-KDD, CICIDS-2017, CICIDS-2018, UNSW-NB15) under a leakage-free fold-local pipeline, entropy-only features achieve weighted F1 of 0.708--0.989, matching conventional features without degrading performance. Full operational metric reporting then exposes failure modes that aggregate F1 conceals. On CICIDS-2018, F1$=$0.74 hides a detection rate (DR) of 0.48, and on held-out attack families F1 exceeds 0.998 while DR falls to zero. Under temporal shift, a pseudo-live replay of 703K flows reveals a threshold-ranking divergence in which score ranking is preserved (AUC$=$0.87) but fixed thresholds collapse (DR$=$0.082) and recalibration offers no recovery. SHapley Additive exPlanations (SHAP) fold-stability analysis (Spearman $\rho=0.80$--$0.95$) confirms that entropy attributions are reproducible and domain-coherent across heterogeneous environments.

      \vspace{0.5em}
      \textbf{Keywords:} Intrusion detection systems, Entropy-based feature engineering, Explainable AI, SHAP, Network traffic analysis.
    \end{tcolorbox}
  \end{center}
\end{strip}

\section{Introduction}\label{sec:intro}
\vspace{0.5em}

The proliferation of networked infrastructure across enterprise, cloud, and embedded environments has made automated intrusion detection an operational necessity. Machine learning-based IDS have matured substantially over the past decade, with ensemble classifiers, particularly gradient-boosted trees and random forests, consistently achieving strong detection accuracy on standard benchmarks~\cite{khraisat2019survey,ring2019survey}. Despite this progress, three gaps in current IDS practice constrain broader applicability.

First, standard IDS pipelines operate on flow-level feature vectors: packet counts, byte volumes, connection duration, and statistical summaries extracted by tools such as CICFlowMeter. These aggregate descriptors discard the distributional and sequential structure of traffic that is known to differ systematically between benign sessions and attacks~\cite{bouke2026entropy}. Information-theoretic measures quantify precisely this kind of structural difference, yet their integration into supervised IDS pipelines has been limited by a practical barrier. Classical entropy computation such as approximate entropy (ApEn)~\cite{pincus1991approximate} and sample entropy (SampEn)~\cite{richman2000sample} requires ordered sequences of raw packet measurements that are unavailable in pre-aggregated flow datasets, the dominant format in the IDS benchmark literature~\cite{tavallaee2009kdd,sharafaldin2018cicids,moustafa2015unsw}. No prior work derives entropy analytically from the summary statistics that flow records already contain.

Second, interpretability remains limited. High-performing ensemble classifiers are opaque, and in operational security environments analysts must audit model decisions before acting on them; in regulated contexts this opacity creates compliance obstacles~\cite{bouke2025xairf,bouke2026lightgbm}. SHAP provides a theoretically grounded attribution framework~\cite{lundberg2017unified,lundberg2018treeshap}, but its application to entropy-enriched IDS pipelines has not been studied, leaving open whether entropy features receive domain-coherent SHAP explanations or produce consistent attributions across environments.

Third, entropy-based detectors are typically evaluated on a single dataset, so the cross-dataset transferability of entropy signatures and the factors governing it remain unstudied.

This paper addresses these gaps through the MDE framework. The statistics already present in flow records (means, standard deviations, minima, maxima, and packet counts) implicitly characterize the underlying distributions of packet sizes and inter-arrival times; MDE computes entropy analytically from those statistics, eliminating the need for raw packet access while producing features grounded in information theory and directly interpretable via SHAP.

MDE differs from three categories of related work. \emph{Conventional flow statistics} (byte counts, packet rates, duration) capture aggregate magnitudes but not distributional structure, and are susceptible to the labeling artifacts documented by Engelen et al.~\cite{engelen2021troubleshooting}. \emph{Statistical feature engineering} approaches (e.g.\ skewness, kurtosis, higher-order moments) are empirically motivated but lack a principled basis for feature selection; MDE features are grounded in information theory with analytic definitions and known ranges (Propositions~\ref{prop:ade} and~\ref{prop:jsd}). \emph{Representation learning} methods (convolutional neural networks (CNNs), long short-term memory networks (LSTMs), autoencoders) extract features through training and produce non-interpretable representations; MDE features require no training data, apply to any flow-level schema, and are natively interpretable via SHAP.

The main contributions of this paper are as follows:
\begin{enumerate}
  \item We propose MDE, a method that constructs 7--12 interpretable entropy features analytically from pre-aggregated network flow statistics at three complementary levels, within-flow Gaussian differential entropy (L1), cross-directional JSD (L2), and flag-pattern Shannon entropy (L3), requiring no raw packet access or training data for feature construction.
  \item We develop a leakage-free, fold-local protocol that reports the full operational metric suite (DR, false alarm rate (FAR), Matthews Correlation Coefficient (MCC), and precision-recall AUC (PR-AUC) alongside F1) and applies it across cross-validation, temporal-split, pseudo-live replay, cross-dataset transfer, and unseen-attack-family settings, designed to surface failure modes that aggregate scores conceal.
  \item We analyze SHAP attributions for MDE-augmented classifiers across five cross-validation folds, quantifying the rank-stability and domain-coherence of analytical differential entropy (ADE) and JSD attributions across structurally distinct environments.
\end{enumerate}

The remainder is organized as follows. Section~\ref{sec:related} reviews related work. Section~\ref{sec:background} covers entropy and SHAP foundations. Section~\ref{sec:framework} presents the MDE framework. Section~\ref{sec:setup} describes the experimental setup. Section~\ref{sec:results} reports the results. Sections~\ref{sec:discussion} and~\ref{sec:conclusion} discuss and conclude.

\section{Related Work}\label{sec:related}
\vspace{0.5em}

\subsection{ML-Based IDS: Recent Advances}

ML-based IDS has evolved from early shallow classifiers through gradient-boosted ensembles to transformer, graph neural network (GNN), and federated architectures~\cite{khraisat2019survey,thakkar2022survey,abdulganiyu2023systematic}. Gradient-boosted trees and random forests remain consistently dominant on tabular flow data~\cite{thakkar2022survey}, while deep learning adds value in sequential and raw-packet settings. Transformer architectures have been applied to flow-level sequences~\cite{wu2022rtids}, and a recent comprehensive survey covering attention-based and large language model approaches~\cite{kheddar2025transformers} identifies cross-dataset generalization as the principal unsolved challenge. GNN-based approaches model network topology explicitly and show promise for lateral-movement and Advanced Persistent Threat (APT) detection~\cite{zhong2024gnn}, though they require graph construction infrastructure absent in standard flow deployments. Federated IDS~\cite{buyuktanir2025federated} address cross-organizational data-sharing constraints, achieving near-centralized accuracy while preserving local privacy, which is an increasingly important practical requirement. Despite these architectural advances, the literature consistently relies on the same families of aggregate flow statistics (packet counts, byte volumes, flow duration) without principled information-theoretic enrichment~\cite{ring2019survey,apruzzese2022sok}. Tama and Rhee~\cite{tama2019indepth} and Faker and Dogdu~\cite{faker2019intrusion} establish strong baselines on NSL-KDD with XGBoost and deep learning respectively, but both operate on entirely conventional feature sets. The recurring limitation is therefore one of representation rather than architecture. Effort has concentrated on the classifier, while the input feature space has remained aggregate and blind to the distributional structure of traffic.

Beyond representation, the reliability of IDS evaluation is itself contested. Engelen et al.~\cite{engelen2021troubleshooting} audited CICIDS-2017 and found labeling artifacts making some attack categories trivially separable by a single feature (e.g.\ \texttt{Flow Duration}$=0$). Ring et al.~\cite{ring2019survey} surveyed 34 IDS datasets and concluded that most published accuracy figures are inflated by dataset-specific artifacts. Sarhan et al.~\cite{sarhan2021netflow} showed that cross-dataset transfer typically collapses to near-chance performance. Recent work on standardized dataset evaluation frameworks~\cite{tori2025framework} proposes MITRE ATT\&CK-aligned metrics for assessing dataset relevance to real threat scenarios, reinforcing that single-dataset evaluations are insufficient for claiming generalization. Taken together, these findings indicate that single-dataset, within-distribution cross-validation (CV) evaluations reported through aggregate scores systematically overstate operational performance, yet such protocols remain the field norm.

\subsection{Entropy-Based Feature Engineering}

Information-theoretic measures have a foundational history in network anomaly detection. Wagner and Plattner~\cite{wagner2005entropy} showed that Shannon entropy of packet-size and source-address distributions detects worm propagation and Denial-of-Service (DoS) floods, establishing that entropy captures the structural regularity of benign traffic more robustly than absolute counts. Nychis et al.~\cite{nychis2008empirical} confirmed empirically that header-field entropy is an effective anomaly indicator across multiple operational networks. Xu et al.~\cite{xu2005internet} showed that flooding attacks collapse destination-port entropy while benign traffic maintains it across all dimensions. Recent work continues to advance this direction: Kenyon~\cite{kenyon2024entropy} characterizes payload entropy profiles across flow types, confirming that entropy measurements reliably distinguish attack and benign flows even without payload decryption; Yu et al.~\cite{yu2024renyi} propose Rényi-entropy-driven anomaly detection with dynamic thresholding, achieving lower false-alarm rates than fixed-threshold Shannon-entropy detectors. Sequential complexity measures such as ApEn~\cite{pincus1991approximate} and SampEn~\cite{richman2000sample} require ordered packet sequences that are incompatible with pre-aggregated CICFlowMeter output~\cite{sharafaldin2018cicids,sarhan2021netflow}. A gap therefore persists between the well-documented discriminative value of entropy and the pre-aggregated flow formats that dominate practical IDS pipelines, for which no existing estimator derives entropy from the summary statistics the records already contain.

\subsection{Explainability in IDS}

Explainability has become an operational requirement for IDS in high-stakes and regulated environments. Mahbooba et al.~\cite{mahbooba2021explainable} showed that rule-based explanations raise analyst alert-validation rates from 61\% to 94\%. Patil et al.~\cite{patil2022explainable} demonstrated that SHAP attribution improves analyst trust in cloud-based IDS. A recent systematic review~\cite{khan2025xai} covering explainable AI (XAI) methods applied to IDS (2020--2024) concludes that TreeSHAP is the dominant explanation approach for ensemble classifiers in industrial deployments, valued for its exact computation and consistency guarantees~\cite{lundberg2018treeshap}. Bouke et al.~\cite{bouke2025xairf,bouke2026lightgbm,bouke2024bukagini} have applied SHAP to spam detection and IDS, confirming that tree-ensemble attributions are stable across folds and align with domain-expert intuition. These studies, however, attribute exclusively over conventional flow features; whether information-theoretic features receive coherent attributions, and whether those attributions remain consistent across heterogeneous environments, has not been examined.

\subsection{Synthesis and Positioning}

The threads above converge on a single under-addressed need. Architectural innovation in ML-based IDS has outpaced innovation in feature representation, which remains aggregate and structure-agnostic; entropy captures the distributional structure that aggregate statistics discard, but its established estimators are incompatible with the pre-aggregated datasets that dominate the field; explainability is now an operational requirement, yet entropy-derived features have never been subjected to attribution analysis; and prevailing single-dataset, aggregate-metric protocols overstate generalization.

The present work is positioned precisely at this intersection. It derives entropy analytically from the summary statistics already present in flow records, removing the raw-packet barrier while preserving the distributional sensitivity that motivates entropy. Its closed-form features are natively interpretable, enabling a SHAP fold-stability analysis of entropy attributions. It is then evaluated under a leakage-free, multi-dataset protocol with full operational metrics that exposes the failure modes aggregate single-dataset scores conceal. The following sections formalize this framework (Section~\ref{sec:framework}) and evaluate it (Section~\ref{sec:results}).

\section{Background}\label{sec:background}
\vspace{0.5em}

The MDE framework is grounded in three information-theoretic constructs and one attribution framework. This section defines each in the precise form used throughout the paper.

\subsection{Entropy Foundations}

\begin{definition}[Shannon entropy~\cite{shannon1948mathematical}]
For a discrete random variable $X$ with probability mass function $p$, the Shannon entropy is
$H(X) = -\sum_x p(x)\log p(x),$
where $\log$ denotes the natural logarithm throughout this paper (units: nats); $H(X) \geq 0$ with equality only when $X$ is deterministic.
\end{definition}

\begin{definition}[Gaussian differential entropy~\cite{cover1991elements}]
For a continuous random variable $X \sim \mathcal{N}(\mu,\sigma^2)$, the differential entropy is
$h(X) = \tfrac{1}{2}\ln(2\pi e\,\sigma^2),$
where $e \approx 2.718$ is Euler's number, $\mu$ is the mean, and $\sigma^2$ is the variance. $h(X)$ is a monotone increasing function of $\sigma$, and is independent of $\mu$.
\end{definition}

\begin{definition}[Jensen-Shannon divergence~\cite{lin1991divergence}]
For distributions $P$ and $Q$ with mixture $M = \tfrac{1}{2}(P+Q)$, the Jensen-Shannon divergence is
$\mathrm{JSD}(P\|Q) = \tfrac{1}{2}\mathrm{KL}(P\|M) + \tfrac{1}{2}\mathrm{KL}(Q\|M) \in [0,\ln 2],$
where $\mathrm{KL}(P\|Q)$ is the Kullback-Leibler divergence. JSD is symmetric and bounded, making it a well-suited measure of distributional asymmetry between forward and backward traffic.
\end{definition}

\subsection{SHAP Attribution}

SHAP~\cite{lundberg2017unified} assigns each feature $i$ a value $\phi_i$ satisfying
$\hat{f}(x) = \phi_0 + \sum_{i=1}^{p}\phi_i,$
where $x \in \mathbb{R}^p$ is an input instance with $p$ features, $\hat{f}(x)$ is the model's scalar output for that instance, $\phi_0 = \mathbb{E}[\hat{f}]$ is the expected model output over the training set (the baseline prediction), and $\phi_i$ is the Shapley value from cooperative game theory, measuring the average marginal contribution of feature $i$ across all possible feature subsets. TreeSHAP~\cite{lundberg2018treeshap} computes exact Shapley values for tree ensembles in polynomial time, enabling efficient attribution across large feature sets.

\section{The MDE Framework}\label{sec:framework}
\vspace{0.5em}

MDE is defined through four sub-sections: the design rationale (Section~\ref{sec:rationale}), theoretical propositions (Section~\ref{sec:theory}), the three analytical entropy levels (L1--L3), and the composite score and dataset adaptation steps.

\subsection{Behavioral Representation}\label{sec:rationale}

MDE is a compact analytical feature construction method, not a feature selection method. Feature selection identifies which features within an existing set are most discriminative and discards the rest; the underlying feature pool is unchanged. MDE does not rank, filter, or reweight any existing flow feature. Instead, it constructs a small set of new features (entropy and divergence values) that are not present in any standard flow record and that did not exist before the transformation is applied. The inputs are the summary statistics already present in the record (means, standard deviations, packet counts, flag counts); the outputs are information-theoretic quantities grounded in a behavioral hypothesis about the structural difference between attack and benign traffic.

MDE is also distinct from general-purpose feature transformation approaches such as principal component analysis or kernel mappings. Those methods transform feature geometry without grounding the transformation in a domain theory. Each MDE feature corresponds to a specific, formalized behavioral claim: L1 captures within-flow distributional complexity (Proposition~\ref{prop:ade}); L2 captures directional asymmetry between source-to-destination and destination-to-source traffic (Proposition~\ref{prop:jsd}); L3 captures protocol-flag diversity. These are closed-form information-theoretic expressions, not learned representations, and their expected behavior under attack and benign conditions is analytically derivable from the underlying distributional assumptions.

\subsubsection{Feature Set Size and Behavioral Coverage}

The feature count follows directly from the three-level hierarchy applied to whichever directional statistics are available in each dataset schema. For each available direction (forward, backward) and signal type (packet size, inter-arrival time), one ADE value is computed wherever the required mean and standard deviation are present. One JSD value is computed for each directional pair whose marginal distributions can be parameterised from available statistics. One flag-entropy value is computed where per-flow flag counts are recorded. This yields 7 features on schemas with minimal directional statistics (NSL-KDD) and up to 12 on richer CICFlowMeter-format datasets. The count is not tuned to optimize any classification metric. It reflects the three behavioral dimensions (distributional complexity, directional asymmetry, and protocol irregularity) with minimal redundancy. Adding further features within the same hierarchy would introduce correlated quantities computed from the same input statistics.

\subsubsection{Analytical Properties of MDE Features}

The compact analytical design provides four concrete properties. First, \emph{portability}: the formulas apply to any flow record containing means, standard deviations, and packet counts, without retraining or dataset-specific engineering, enabling consistent application across NSL-KDD, CICIDS-2017/2018, and UNSW-NB15 under a single analytical definition. Second, \emph{interpretability}: each feature has a closed-form expression and a direct theoretical connection to traffic behavior, so SHAP attributions reflect genuinely domain-relevant quantities rather than opaque numerical artifacts. Third, \emph{analytical tractability}: Propositions~\ref{prop:ade} and~\ref{prop:jsd} provide closed-form bounds that predict when the features will and will not be discriminative, enabling a priori reasoning about expected behavior before any model is trained. Fourth, \emph{low inference overhead}: computing 7--12 entropy values per flow adds negligible cost over conventional feature extraction.

\subsubsection{Predictive Scope and Contribution Boundaries}

MDE does not claim predictive uplift over conventional flow statistics. As Tables~\ref{tab:ablation} and~\ref{tab:fullmetrics} confirm, combined and conventional conditions achieve statistically indistinguishable within-distribution F1 on all tested datasets. The contribution is one of representation, providing compact, theoretically grounded entropy features with schema-independent construction, native SHAP interpretability, and full operational metric reporting, as evaluated in Section~\ref{sec:results}.

\subsection{Theoretical Foundations}\label{sec:theory}

Let $\mathcal{D} = \{(f_i,\, y_i)\}_{i=1}^N$ be a labeled IDS dataset where each flow
$f_i \in \mathbb{R}^d$ is a vector of pre-aggregated statistics and
$y_i \in \{0,1\}$ is a binary label ($y_i{=}1$: attack, $y_i{=}0$: benign).
The MDE transformation is a deterministic map $\phi: \mathbb{R}^d \to \mathbb{R}^k$
with $k \ll d$ that derives $k$ entropy-valued features directly from the statistics
already present in $f_i$.
The classifier operates on the full augmented space $\mathbb{R}^{d+k}$
(combined condition) or on $\mathbb{R}^k$ alone (entropy-only condition).

The discriminative power of MDE rests on a fundamental generative difference
between automated attack tools and human-driven applications.
Automated tools (scanners, Distributed Denial-of-Service (DDoS) agents, bots) operate under programmatic
constraints. They generate packets of fixed or narrowly bounded size, fire probes
at regular intervals, and produce near-unidirectional flows.
Human-driven sessions (browsing, Voice over Internet Protocol (VoIP), streaming) exhibit natural variability in
packet sizes, irregular inter-arrival timing, and bidirectional data exchange.
This asymmetry motivates three entropy-based discriminators:
\emph{within-flow variance} (L1), \emph{directional distributional asymmetry} (L2),
and \emph{protocol-flag diversity} (L3).

\begin{proposition}[Regularity--Entropy Correspondence]\label{prop:ade}
Let $X_a \sim \mathcal{N}(\mu_a,\sigma_a^2)$ and
$X_b \sim \mathcal{N}(\mu_b,\sigma_b^2)$ model the packet-size distributions of
an attack flow $a$ and a benign flow $b$ respectively.
If the attack tool imposes tighter size constraints than natural application
traffic, then $\sigma_a < \sigma_b$, and
\begin{equation}
  h(X_a)\;=\;\tfrac{1}{2}\ln(2\pi e\,\sigma_a^2)\;<\;\tfrac{1}{2}\ln(2\pi e\,\sigma_b^2)\;=\;h(X_b). \label{eq:prop1}
\end{equation}
\end{proposition}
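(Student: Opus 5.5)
The plan has two parts: compute the two differential entropies in closed form, then compare them using strict monotonicity of the logarithm. The hypothesis $\sigma_a<\sigma_b$ is given directly; the phrase about tighter size constraints only motivates it, so no probabilistic argument about attack tools is needed. We assume $\sigma_a>0$ so that $X_a$ is a nondegenerate Gaussian with a well-defined differential entropy.

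First, by the definition of Gaussian differential entropy above, $h(X_a)=\tfrac12\ln(2\pi e\,\sigma_a^2)$ and $h(X_b)=\tfrac12\ln(2\pi e\,\sigma_b^2)$. These values do not depend on $\mu_a$ or $\mu_b$. This gives the two equalities in \eqref{eq:prop1}. If one wants to check the formula rather than cite it, write $h(X)=-\mathbb{E}[\ln p(X)]$ with $p$ the $\mathcal{N}(\mu,\sigma^2)$ density. Then $-\ln p(X)=\tfrac12\ln(2\pi\sigma^2)+(X-\mu)^2/(2\sigma^2)$, and taking expectations gives $\tfrac12\ln(2\pi\sigma^2)+\tfrac12$.

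Second, we prove the strict inequality. Since $0<\sigma_a<\sigma_b$ and squaring is strictly increasing on $(0,\infty)$, we get $\sigma_a^2<\sigma_b^2$. Multiplying by the positive constant $2\pi e$ preserves the strict inequality. Because $\ln$ is strictly increasing on $(0,\infty)$ and $\tfrac12>0$, applying $\tfrac12\ln$ preserves it again. An equivalent form is $h(X_b)-h(X_a)=\ln(\sigma_b/\sigma_a)>0$, which also shows the gap depends only on the ratio of standard deviations.

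There is no substantial obstacle. The only care needed is at the boundary: the degenerate case $\sigma_a=0$ (perfectly fixed-size packets) falls outside the Gaussian model, where $h(X_a)$ would formally be $-\infty$, so it must be excluded or treated as a limit, which is consistent with the inequality. We will also remark that the result concerns the modelled Gaussian entropies. It does not claim that real packet-size distributions are Gaussian; for non-Gaussian data, the value computed from the flow's mean and standard deviation is an upper bound on the true differential entropy, by the maximum-entropy property of the Gaussian for fixed variance.
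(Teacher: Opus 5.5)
Your proof is correct and follows essentially the same route as the paper: the paper gets strict monotonicity of $h$ in $\sigma^2$ from $\frac{d}{d\sigma^2}\bigl[\tfrac12\ln(2\pi e\,\sigma^2)\bigr]=\tfrac{1}{2\sigma^2}>0$, and you get it directly from the monotonicity of squaring and of $\ln$ (adding the explicit gap $h(X_b)-h(X_a)=\ln(\sigma_b/\sigma_a)$ and the condition $\sigma_a>0$). Your closing remark is also correct: for fixed variance the Gaussian value is an \emph{upper} bound on the true differential entropy, whereas the paper's following remark calls it a ``lower bound,'' which is the wrong direction.
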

\begin{proof}
$h(X)$ is strictly increasing in $\sigma^2$ since
$\frac{d}{d\sigma^2}\bigl[\tfrac{1}{2}\ln(2\pi e\,\sigma^2)\bigr]=\tfrac{1}{2\sigma^2}>0$.
Therefore $\sigma_a<\sigma_b \Rightarrow h(X_a)<h(X_b)$.
\end{proof}
\begin{remark}
Even when the true packet-size distribution is non-Gaussian, the Gaussian differential
entropy serves as a conservative lower bound. By the maximum-entropy principle,
the Gaussian maximises entropy for a given variance.
Hence any positive entropy gap between attack and benign flows under the Gaussian
model implies a gap of at least equal magnitude under the true distribution.
\end{remark}

\begin{remark}[Scope and limitations of the Gaussian approximation]
The Gaussian ADE (Eq.~\ref{eq:ade}) is an \emph{approximation} and \emph{not} a claim that network packet-size distributions are Gaussian. Real traffic exhibits heavy tails (skewness 1.5--9.7, excess kurtosis 0.9--110 across benchmarks) and may be multimodal (e.g.\ mixed HTTP/HTTPS flows). In such cases ADE underestimates the true differential entropy and the approximation may fail for flows whose classification relies on distributional shape rather than variance contrast. Three settings where the Gaussian assumption is expected to degrade: (1) multimodal traffic mixtures within a single flow, (2) datasets where attack and benign traffic share similar variance but differ in higher-order moments, and (3) encrypted traffic where packet payloads are padded to fixed sizes. Non-parametric alternatives (kernel-density entropy, Rényi/Tsallis entropy, histogram-based estimators) could address these cases but require design choices (bandwidth, order $\alpha$) that introduce their own model assumptions; we regard this as important future work.
\end{remark}

\begin{proposition}[Unidirectionality and Maximum JSD]\label{prop:jsd}
Let $P = \mathcal{N}(\mu_f,\sigma_f^2)$ and
$Q_\varepsilon = \mathcal{N}(\varepsilon\mu_f,\,\varepsilon^2\sigma_f^2)$
for $\varepsilon\in(0,1]$, modeling backward traffic scaled to fraction
$\varepsilon$ of the forward flow. Then:
\vspace{-0.3em}
\begin{enumerate}[label=(\roman*),nosep,leftmargin=*]
  \item $\mathrm{JSD}(P\|Q_1) = 0$ \enspace(symmetric bidirectional flow),
  \item $\displaystyle\lim_{\varepsilon\to 0^+}\!\mathrm{JSD}(P\|Q_\varepsilon)=\ln 2\approx 0.693$ \enspace(unidirectional flow).
\end{enumerate}
\end{proposition}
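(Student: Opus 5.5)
Part (i) is immediate, and for part (ii) I will squeeze $\mathrm{JSD}(P\|Q_\varepsilon)$ between the upper bound $\ln 2$ from the definition of JSD and a lower bound $\ln 2$ minus a quantity that has a closed form for Gaussians and tends to $0$. Throughout I assume $\sigma_f>0$, which is needed for $P$ to be a non-degenerate Gaussian.

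\emph{Part (i).} At $\varepsilon=1$ we have $Q_1=\mathcal{N}(\mu_f,\sigma_f^2)=P$. Then $M=P$, so both Kullback--Leibler terms vanish and $\mathrm{JSD}(P\|Q_1)=0$.

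\emph{Rewriting the divergence.} Let $p$ and $q=q_\varepsilon$ denote the densities of $P$ and $Q_\varepsilon$. Expanding the definition with $m=\tfrac12(p+q)$ and using $\int p=\int q=1$ gives
\begin{equation*}
\mathrm{JSD}(P\|Q_\varepsilon)=\ln 2-\tfrac12\int\Bigl[p\ln\Bigl(1+\tfrac{q}{p}\Bigr)+q\ln\Bigl(1+\tfrac{p}{q}\Bigr)\Bigr]dx .
\end{equation*}
This identity reduces part (ii) to showing that the correction integral $R_\varepsilon$ on the right tends to $0$. It also shows $R_\varepsilon\ge 0$, consistent with $\mathrm{JSD}\le\ln 2$.

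\emph{Bounding the correction.} This is the main step, because a naive pointwise limit or the bound $\ln(1+t)\le t$ only gives $R_\varepsilon\le 2$, which is not small. Instead I use $\ln(1+t)\le\sqrt{t}$ for $t\ge0$. This bound holds because $\sqrt{t}-\ln(1+t)$ vanishes at $0$ and has nonnegative derivative $\tfrac{1}{2\sqrt t}-\tfrac{1}{1+t}$, since $1+t\ge 2\sqrt t$. Applying it gives $p\ln(1+q/p)\le\sqrt{pq}$ and symmetrically $q\ln(1+p/q)\le\sqrt{pq}$. Hence
\begin{equation*}
0\le R_\varepsilon\le 2\,\mathrm{BC}(P,Q_\varepsilon),
\end{equation*}
where $\mathrm{BC}(P,Q)=\int\sqrt{pq}\,dx$ is the Bhattacharyya coefficient.

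\emph{Closed form and limit.} For two univariate Gaussians the Bhattacharyya coefficient is known in closed form. With $\sigma_1=\sigma_f$, $\sigma_2=\varepsilon\sigma_f$, $\mu_1=\mu_f$ and $\mu_2=\varepsilon\mu_f$ it gives
\begin{equation*}
\mathrm{BC}(P,Q_\varepsilon)=\sqrt{\frac{2\varepsilon}{1+\varepsilon^2}}\;\exp\!\Bigl(-\frac{(1-\varepsilon)^2\mu_f^2}{4\sigma_f^2(1+\varepsilon^2)}\Bigr)\le\sqrt{\frac{2\varepsilon}{1+\varepsilon^2}} .
\end{equation*}
The right-hand side tends to $0$ as $\varepsilon\to0^+$. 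Combining the two bounds,
\begin{equation*}
\ln 2-2\sqrt{\frac{2\varepsilon}{1+\varepsilon^2}}\;\le\;\mathrm{JSD}(P\|Q_\varepsilon)\;\le\;\ln 2 ,
\end{equation*}
so $\mathrm{JSD}(P\|Q_\varepsilon)\to\ln 2\approx0.693$ as $\varepsilon\to0^+$, which is part (ii). As a by-product, the argument gives an explicit rate of convergence of order $O(\sqrt{\varepsilon})$.
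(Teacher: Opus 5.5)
Your proof is correct, and it takes a genuinely different route from the paper's. The paper's sketch treats the two Kullback--Leibler terms separately using pointwise limits. As $\varepsilon\to0^+$, $Q_\varepsilon$ concentrates at $0$, so $M_\varepsilon\to\tfrac12 P$ $P$-almost everywhere, giving $\mathrm{KL}(P\|M_\varepsilon)\to\ln 2$. Near $0$, $M_\varepsilon\approx\tfrac12 Q_\varepsilon$, giving $\mathrm{KL}(Q_\varepsilon\|M_\varepsilon)\to\ln 2$.

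That argument conveys the intuition: the two distributions become effectively disjoint, so each half of the divergence saturates. It is only heuristic, though, for three reasons:
\begin{itemize}
\item It asserts that the support of $P$ is bounded away from zero, which is literally false for a Gaussian.
\item It passes limits through the integrals without any domination argument.
\item The second term integrates against the moving measure $Q_\varepsilon$, so a pointwise limit does not directly control it without a rescaling.
\end{itemize}

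Your route replaces all of this with three ingredients: the exact identity $\mathrm{JSD}=\ln 2-\tfrac12\int[\cdots]$, the elementary inequality $\ln(1+t)\le\sqrt t$, and the closed-form Gaussian Bhattacharyya coefficient. This gives you several things:
\begin{itemize}
\item Full rigor with no limit interchanges.
\item No support-separation assumption; it works uniformly in $\mu_f$, including $\mu_f=0$.
\item The upper bound $\mathrm{JSD}\le\ln 2$ for free.
\item An explicit non-asymptotic rate.
\end{itemize}

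Because your bound $p\ln(1+q/p)\le\sqrt{pq}$ applies to each term on its own, it also recovers the paper's finer claim that each KL term tends to $\ln 2$ individually: $\ln 2-\mathrm{BC}(P,Q_\varepsilon)\le\mathrm{KL}(P\|M_\varepsilon)\le\ln 2$, and likewise for $Q_\varepsilon$.

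One bookkeeping remark: because of the factor $\tfrac12$ in front of the correction integral, you actually obtain $\mathrm{JSD}(P\|Q_\varepsilon)\ge\ln 2-\mathrm{BC}(P,Q_\varepsilon)$. Your stated lower bound with the constant $2$ is therefore valid but loose by a factor of two.
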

\begin{proof}[Proof sketch]
(i) is immediate since $P=Q_1$.
For (ii), let $M_\varepsilon = \tfrac{1}{2}(P+Q_\varepsilon)$.
As $\varepsilon\!\to\!0^+$, $Q_\varepsilon$ concentrates near zero while $P$'s
support is bounded away from zero, so
$M_\varepsilon(x)\to\tfrac{1}{2}P(x)$ $P$-almost everywhere, giving
$\mathrm{KL}(P\|M_\varepsilon)\to\mathrm{KL}(P\|\tfrac{1}{2}P)=\ln 2$.
Symmetrically, near zero $M_\varepsilon(x)\approx\tfrac{1}{2}Q_\varepsilon(x)$,
giving $\mathrm{KL}(Q_\varepsilon\|M_\varepsilon)\to\ln 2$.
Since $\mathrm{JSD}=\tfrac{1}{2}\mathrm{KL}(P\|M)+\tfrac{1}{2}\mathrm{KL}(Q\|M)$,
the limit is $\tfrac{1}{2}\ln 2+\tfrac{1}{2}\ln 2=\ln 2$.
\end{proof}
\noindent The analytical upper bound $\ln 2$ corresponds closely to the
maximum SHAP-attributed JSD value ($0.693$ nats) observed for DDoS and port-scan
flows in Section~\ref{sec:results}, suggesting that the theoretical maximum is
approached for strongly unidirectional attack flows.

\begin{figure}[t]
  \centering
  \includegraphics[width=\linewidth]{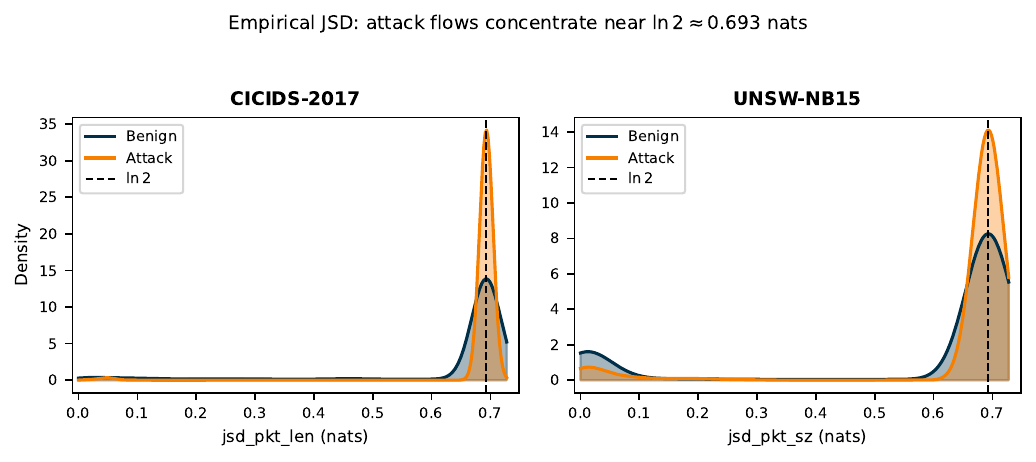}
  \caption{Empirical JSD distributions for attack and benign flows (kernel density estimate, KDE). Attack flows in CICIDS-2017 (\texttt{jsd\_pkt\_len}) and UNSW-NB15 (\texttt{jsd\_pkt\_sz}) concentrate near the theoretical upper bound $\ln 2\approx0.693$ nats (dashed), while benign flows cluster near zero, validating Proposition~\ref{prop:jsd} empirically.}
  \label{fig:jsd_empirical}
\end{figure}

Fig.~\ref{fig:jsd_empirical} shows that attack flows in both CICIDS-2017 and UNSW-NB15 concentrate near $\ln 2$, while benign flows cluster near zero, as expected for unidirectional attack bursts approaching maximum directional asymmetry (Proposition~\ref{prop:jsd}).

\subsection{Overview}

MDE computes entropy analytically from the statistical summaries already present in pre-aggregated flow records, without requiring raw packet sequences. Fig.~\ref{fig:architecture} shows the end-to-end pipeline. Raw flow statistics feed three independent entropy levels (L1--L3), whose outputs are concatenated into a feature matrix that is passed to a tree-based classifier, with SHAP providing post-hoc per-prediction explanations. The framework is also dataset-aware. The specific feature columns used at each level adapt to the schema of each benchmark, while the underlying analytical definitions stay uniform across all of them.

\begin{figure*}[t]
\centering
\includegraphics[width=\textwidth]{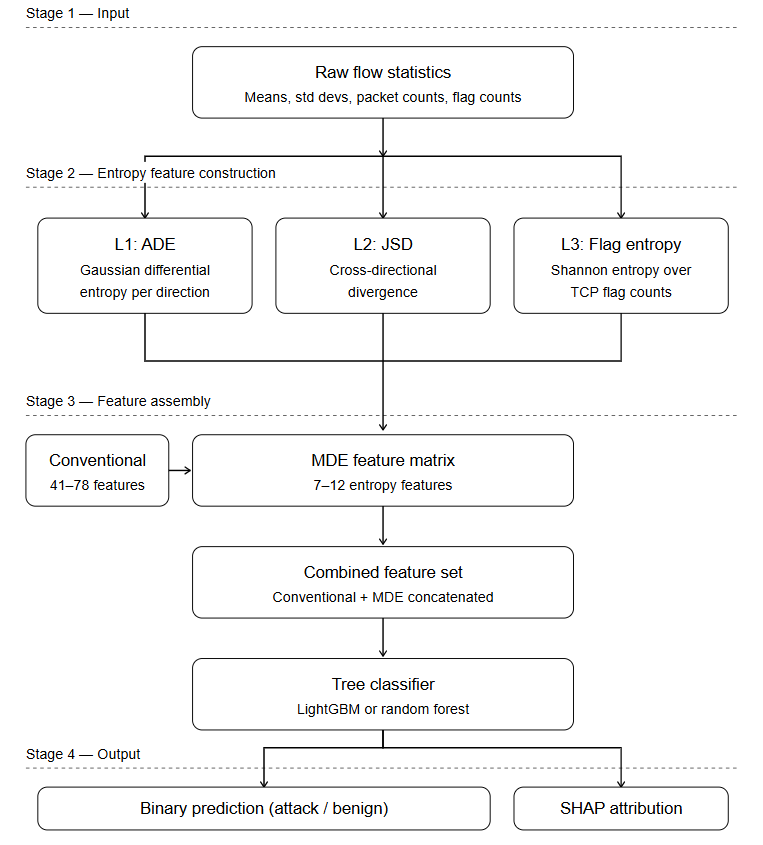}
\caption{MDE four-stage pipeline. \textbf{Stage~1} (Input): pre-aggregated flow statistics (means, standard deviations, packet counts, flag counts). \textbf{Stage~2} (Entropy feature construction): three levels computed analytically without raw packet access: L1 Gaussian differential ADE per traffic direction, L2 cross-directional Jensen-Shannon divergence, L3 Shannon entropy over TCP flag counts. \textbf{Stage~3} (Feature assembly): the 7--12 MDE entropy features are concatenated with 41--78 conventional flow features to form the combined input; entropy-only and conventional ablations use each branch independently. \textbf{Stage~4} (Output): binary attack/benign prediction and per-instance SHAP attribution from a LightGBM or random forest classifier.}
\label{fig:architecture}
\end{figure*}

\begin{figure}[t]
  \centering
  \includegraphics[width=\linewidth]{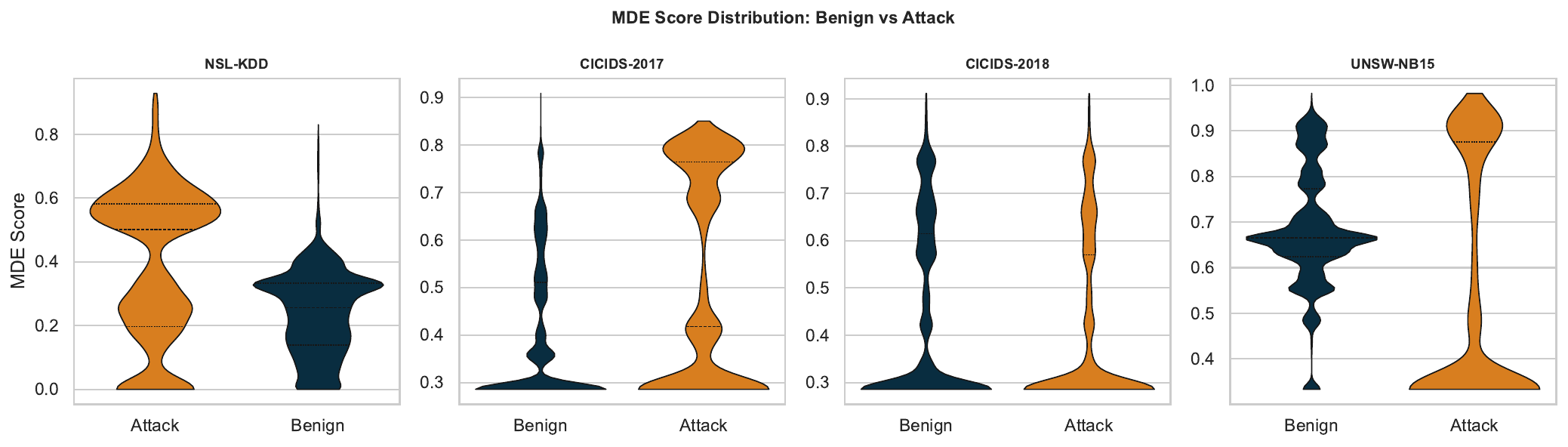}
  \caption{MDE score distributions (Benign vs.\ Attack) across all four benchmark datasets. The degree of class separation varies by dataset, reflecting each dataset's attack structure.}
  \label{fig:framework_overview}
\end{figure}

\subsection{L1: Differential Entropy (ADE)}

For a network flow with forward packet length mean $\mu_f$ and standard deviation $\sigma_f$, we approximate the within-flow packet size distribution as Gaussian and compute:
\begin{equation}
  h^{\text{ADE}}_{\text{fwd}} = \tfrac{1}{2}\ln(2\pi e\,\sigma_f^2).
  \label{eq:ade}
\end{equation}
The same is computed for the backward direction ($h^{\text{ADE}}_{\text{bwd}}$) and for inter-arrival times. Where minimum and maximum packet lengths are available, a complementary uniform range entropy $h^{\text{range}} = \ln(\text{max}-\text{min})$ is also computed. High differential entropy in packet sizes indicates variable, structurally complex traffic; attack flows such as scanning or DDoS typically exhibit low variance in packet size (all probes or all flood packets are identical), producing low ADE values.

\subsection{L2: Cross-Directional JSD}

A structural asymmetry between forward (source-to-destination) and backward (destination-to-source) traffic is a key characteristic of many attack types: DDoS floods are almost entirely unidirectional; port scans generate probes in one direction with negligible responses; legitimate sessions tend to be bidirectional and more balanced. MDE captures this via the JSD between the approximate forward and backward packet-length distributions:
\begin{equation}
  \mathrm{JSD}_{\text{pkt}} = \mathrm{JSD}\!\left(\mathcal{N}(\mu_f,\sigma_f^2)\,\|\,\mathcal{N}(\mu_b,\sigma_b^2)\right).
  \label{eq:jsd}
\end{equation}
Since the JSD between two Gaussians has no closed form, it is approximated by fitting a moment-matched Gaussian to the mixture $M = \tfrac{1}{2}(P+Q)$: the mixture mean and variance are
$\mu_M = \tfrac{1}{2}(\mu_f+\mu_b)$,\quad
$\sigma_M^2 = \tfrac{1}{2}(\sigma_f^2+\sigma_b^2)+\tfrac{1}{4}(\mu_f-\mu_b)^2$,
and JSD is evaluated using these parameters in the KL divergence formula from Definition~3.
A directional balance entropy $H_{\text{dir}} = -r\log r - (1-r)\log(1-r)$, where $r \in [0,1]$ is the fraction of packets (or bytes) in the forward direction, provides a complementary measure of traffic symmetry independent of the Gaussian approximation.

\subsection{L3: Flag-Pattern Entropy}

TCP control flags (FIN, SYN, RST, PSH, ACK, URG) carry protocol-level intent. Benign connections exhibit diverse flag sequences over their lifetime; attacks often manipulate a narrow subset of flags (e.g.\ SYN-only for SYN floods, RST-heavy for reset injections). Given the per-flow flag counts $c_k$ for $k \in \{\text{FIN,SYN,RST,PSH,ACK,URG}\}$ and total flag count $C = \sum_k c_k$, the flag-pattern entropy is:
\begin{equation}
  H_{\text{flags}} = -\sum_{k} \frac{c_k}{C}\log\frac{c_k}{C}.
  \label{eq:flag}
\end{equation}
Here $c_k/C$ is the empirical probability of flag type $k$ within the flow. $H_{\text{flags}}=0$ when a single flag dominates ($c_k = C$ for one $k$); $H_{\text{flags}} = \ln 6$ at maximum when all six flag types occur equally. Low flag entropy identifies flows dominated by a single flag type, a strong indicator of specific attack patterns such as SYN flooding or reset injection.

\subsection{Composite MDE Score}

A composite score aggregates across levels by normalizing each entropy feature to $[0,1]$ using training-set statistics and averaging:
\begin{equation}
  s_{\text{MDE}} = \frac{1}{|\mathcal{F}|}\sum_{f\in\mathcal{F}}\frac{h_f - \min_{\mathcal{T}} h_f}{\max_{\mathcal{T}} h_f - \min_{\mathcal{T}} h_f},
  \label{eq:mde}
\end{equation}
where $\mathcal{F}$ is the set of entropy features available for the given dataset schema, $h_f$ is the value of entropy feature $f$ for the current flow, and $\min_{\mathcal{T}} h_f$, $\max_{\mathcal{T}} h_f$ are the minimum and maximum of feature $f$ over the training set $\mathcal{T}$. For temporal and hold-out experiments, normalization statistics are computed on the historical training corpus and applied unchanged to held-out flows. For cross-validation experiments, LightGBM and Random Forest are invariant to monotonic feature rescaling, so the normalization choice does not affect any reported metric; $s_\text{MDE}$ serves as an interpretability summary in that context. This score provides a single interpretable summary of the overall distributional complexity of a flow, ranging from 0 (maximally regular) to 1 (maximally entropic relative to the training distribution).

\section{Experimental Setup}\label{sec:setup}
\vspace{0.5em}

\subsection{Datasets}\label{sec:datasets}

Four publicly available benchmark datasets are selected to span three structural dimensions: (i)~traffic generation era (1999--2018), (ii)~attack taxonomy breadth (binary to 12 categories), and (iii)~feature schema heterogeneity (NSL-KDD connection records to CICFlowMeter packet statistics). Class imbalance ranges from near-balanced (NSL-KDD 43/57) to moderate (CICIDS-2018 72/28) and heavy (UNSW-NB15 93/7). This diversity is necessary to test whether the analytical MDE framework, which adapts its entropy computations to each schema, produces consistent, interpretable signals across structurally distinct environments. Table~\ref{tab:datasets_summary} provides a structured summary; key properties are elaborated below.

\begin{table}[t]
\centering
\caption{Dataset summary. \#Feat: number of numerical features after cleaning. Imbal.: benign\,:\,attack ratio (approximate).}
\label{tab:datasets_summary}
\footnotesize
\setlength{\tabcolsep}{3pt}
\resizebox{\linewidth}{!}{%
\begin{tabular}{l r r r r}
\toprule
\textbf{Dataset} & \textbf{Flows} & \textbf{\#Feat} & \textbf{Atk types} & \textbf{Imbal.} \\
\midrule
NSL-KDD     &  22,544 & 41 &  4 & 43:57 \\
CICIDS-2017 & 250,000 & 78 & 12 & 80:20 \\
CICIDS-2018 & 150,000 & 78 &  2 & 72:28 \\
UNSW-NB15   & 200,000 & 41 &  9 & 93:7  \\
\bottomrule
\end{tabular}}
\end{table}

NSL-KDD~\cite{tavallaee2009kdd} is selected as the canonical baseline for IDS research. NSL-KDD is a curated refinement of KDD Cup 1999 that removes duplicate records present in the original dataset, yielding 22,544 flows with 41 connection-level features covering SYN flood, probing, R2L, and U2R attack categories. Its balanced class distribution (43\% benign, 57\% attack) and well-characterized properties make it the standard reproducibility anchor against which new approaches are benchmarked. \emph{Known artifact}: the KDD feature set includes high-level connection summaries (e.g., \texttt{serror\_rate}, \texttt{same\_srv\_rate}) that encode attack context implicitly, contributing to high conventional-feature separability.

CICIDS-2017~\cite{sharafaldin2018cicids} is selected for its multi-attack breadth and temporal structure. Generated over five days by the Canadian Institute for Cybersecurity using real traffic replayed through a controlled network, CICIDS-2017 contains 12 attack categories including DDoS, port scanning, brute-force, web attacks, infiltration, and Heartbleed, covering the broadest attack taxonomy of the four evaluated benchmarks. The temporal day-by-day structure enables the time-based generalization experiment of Section~\ref{sec:timesplit}. A 250K stratified sample is used for CV experiments. \emph{Known artifact}: Engelen et al.~\cite{engelen2021troubleshooting} document labeling inconsistencies and near-constant features for certain attack classes; some categories are separable by a single CICFlowMeter statistic (e.g., \texttt{Flow Duration}$=0$), which contributes to near-saturated conventional-feature F1.

CICIDS-2018~\cite{sharafaldin2018cicids} is selected as a harder contemporary counterpart to CICIDS-2017. Generated independently in 2018 under a partially overlapping attack taxonomy, CICIDS-2018 exhibits substantially higher intra-class distributional overlap than CICIDS-2017, yielding a genuine classification challenge that stress-tests the entropy framework under conditions closer to operational complexity. A 150K stratified sample is used. \emph{Known artifact}: the dataset uses pre-encoded binary labels (\texttt{class}=0/1), and the original label column must be carefully excluded from features to prevent trivial leakage (Section~\ref{sec:leakage}).

UNSW-NB15~\cite{moustafa2015unsw} is selected for its modern attack categories and rich feature schema. Collected at the Cyber Range Lab of UNSW Canberra using the IXIA PerfectStorm tool, UNSW-NB15 contains 9 attack types (Fuzzers, DoS, Exploits, Generic, Reconnaissance, Shellcode, Worms, Backdoors, Analysis) against realistic background traffic. Its 41 features include network-level statistics not present in CICFlowMeter datasets: jitter, connection load, time-to-live (TTL) asymmetry, and service type, which provide richer proxies for L1 ADE and L3 entropy computation. A 200K stratified sample is used. \emph{Known artifact}: moderate class imbalance (93\% benign, 7\% attack) is addressed via balanced class weights.

Fig.~\ref{fig:class_dist} shows the class distributions. All datasets are partitioned using stratified sampling to maintain class proportions across training, validation, and test folds.

\begin{figure}[t]
  \centering
  \includegraphics[width=\linewidth]{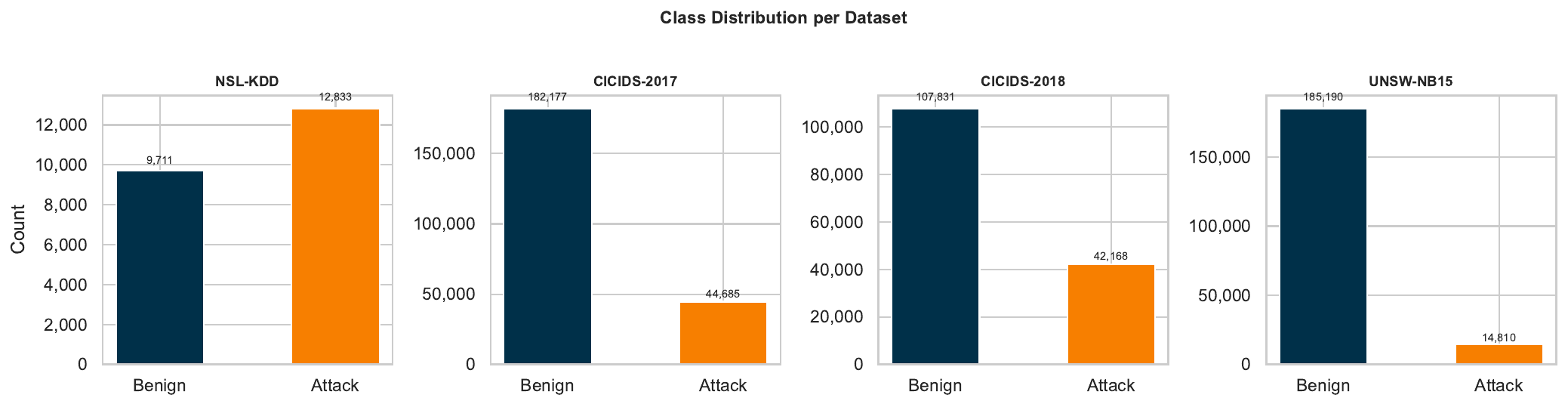}
  \caption{Class distribution (Benign vs.\ Attack) across all four benchmark datasets after stratified sampling.}
  \label{fig:class_dist}
\end{figure}

\subsection{Dataset Adaptation}

Because the four benchmarks use different feature schemas, the MDE computation is adapted per dataset. For CICIDS-2017/2018, all three levels are computed from packet length statistics, inter-arrival time (IAT) statistics, and TCP flag counts directly. For UNSW-NB15, jitter values serve as inter-arrival standard deviation proxies at L1, and TTL asymmetry replaces flag entropy at L3. For NSL-KDD, the rate-based features (serror\_rate, rerror\_rate, same\_srv\_rate) are treated as a probability vector over connection outcomes, and their Shannon entropy constitutes the primary L1/L2 signal. In all cases, the directional balance entropy (L2) is universally computable from byte and packet counts. This adaptation strategy is the enabling design choice that allows MDE to operate across heterogeneous flow schemas under a consistent analytical framework.

\subsection{Feature Construction and Ablation}

For each dataset, three feature matrices are constructed to isolate the contribution of MDE features:
\begin{enumerate}[nosep,label=(\roman*)]
  \item Conventional: original numerical flow features only (no entropy).
  \item Entropy-only: MDE features only (7--12 features depending on dataset schema).
  \item Combined: concatenation of conventional and MDE features.
\end{enumerate}
This three-way ablation isolates the marginal contribution of entropy features and establishes whether they can serve as standalone representations.

\subsection{Classifiers and Training}

Our experiments rely on two tree-ensemble classifiers, LightGBM~\cite{ke2017lightgbm} and Random Forest~\cite{breiman2001random}. LightGBM is configured with 300 estimators, a learning rate of 0.05, 63 leaves, and balanced class weights, while Random Forest uses 200 estimators, a maximum depth of 20, and balanced class weights. Both are trained under stratified 5-fold cross-validation using a fold-local preprocessing pipeline. Within each fold, a \texttt{SimpleImputer}(strategy=median) is fitted on the training split and applied to both training and test splits, followed by a \texttt{PercentileClipper} (99.9th percentile) fitted and applied in the same fold-local manner. This prevents any cross-fold leakage from preprocessing statistics. MDE features are computed from a separate globally-imputed copy of the data (for numerical stability of entropy formulas); since MDE features are deterministic closed-form functions of individual flow records and contain no target information, this does not introduce cross-fold leakage. Evaluation metrics are defined formally in Section~\ref{sec:metrics}. Class imbalance is handled via the \texttt{class\_weight=`balanced'} setting in both classifiers. Infinite values, arising from division by zero in flow feature computation, are replaced by column medians; values exceeding the 99.9th percentile are clipped.

\subsection{Evaluation Metrics}\label{sec:metrics}

Let $\mathrm{TP}$, $\mathrm{TN}$, $\mathrm{FP}$, and $\mathrm{FN}$ denote true positives, true negatives, false positives, and false negatives at the binary (attack/benign) decision boundary. We report the following metrics, selected to reflect both detection capability and operational cost in IDS deployment:

\begin{flalign}
& \text{Precision} = \frac{\mathrm{TP}}{\mathrm{TP}+\mathrm{FP}} & \label{eq:prec}\\[2pt]
& \text{Recall (DR)} = \frac{\mathrm{TP}}{\mathrm{TP}+\mathrm{FN}} & \label{eq:recall}\\[2pt]
& \text{FAR (FPR)} = \frac{\mathrm{FP}}{\mathrm{FP}+\mathrm{TN}} & \label{eq:far}\\[2pt]
& F_1 = \frac{2\cdot\text{Precision}\cdot\text{Recall}}{\text{Precision}+\text{Recall}} & \label{eq:f1}\\[2pt]
& \text{MCC} = \frac{\mathrm{TP}\cdot\mathrm{TN}-\mathrm{FP}\cdot\mathrm{FN}}{\sqrt{(\mathrm{TP}+\mathrm{FP})(\mathrm{TP}+\mathrm{FN})(\mathrm{TN}+\mathrm{FP})(\mathrm{TN}+\mathrm{FN})}} & \label{eq:mcc}
\end{flalign}

Each metric serves a distinct diagnostic purpose. \emph{Precision} (Eq.~\ref{eq:prec}) measures alert reliability. Low precision means many false alarms that burden analyst triage.
\emph{Recall/DR} (Eq.~\ref{eq:recall}) measures attack coverage. Low recall means missed intrusions with potentially severe consequences.
\emph{FAR/FPR} (Eq.~\ref{eq:far}) quantifies the false-alarm operational burden directly relevant to security operations center (SOC) workload~\cite{mahbooba2021explainable}.
\emph{$F_1$} (Eq.~\ref{eq:f1}) balances precision and recall, making it the primary aggregate metric for imbalanced IDS datasets~\cite{khraisat2019survey}. Weighted $F_1$ accounts for class-size differences across the multi-class setting.
\emph{MCC} (Eq.~\ref{eq:mcc}) is a correlation coefficient that provides a single balanced summary even under extreme class imbalance~\cite{chicco2020advantages}; it is reported for multi-class aggregation where weighted $F_1$ may be dominated by the majority class.
\emph{ROC-AUC} is the area under the receiver operating characteristic curve, measuring rank discrimination independent of a classification threshold~\cite{fawcett2006introduction}. AUC is particularly informative under distribution shift (Section~\ref{sec:timesplit}), where per-threshold metrics conflate threshold selection with model quality.
\emph{PR-AUC} (area under the precision-recall curve) is the preferred discrimination metric under heavy class imbalance~\cite{saito2015precision}, reported in Tables~\ref{tab:fullmetrics}, \ref{tab:timesplit}, and~\ref{tab:inpipeline}.

The 5-fold CV ablation (Table~\ref{tab:ablation}) reports weighted $F_1$, Precision, and Recall for each condition and classifier. The full 9-metric suite (F1, Precision, Recall, DR, FAR, Accuracy, MCC, AUC, PR-AUC) for the combined condition is in Table~\ref{tab:fullmetrics}. Per-class attack-type breakdown is in Section~\ref{sec:perclass}.

\subsection{Leakage Prevention}\label{sec:leakage}

Reproducible IDS evaluation requires deliberate protocol choices to prevent data contamination. We address four sources of leakage. First, preprocessing (median imputation, 99.9th-percentile clipping) is applied \emph{fold-locally} inside a \texttt{Pipeline}. The imputer and clipper are fitted only on each training fold and then applied to the corresponding validation fold, so no global statistics cross the fold boundary. Second, MDE features are deterministic closed-form functions of each flow's own statistics (Eqs.~\ref{eq:ade}--\ref{eq:flag}) with no label dependence, so their computation introduces no leakage pathway. Third, all identifier columns (IP addresses, ports, timestamps, flow IDs) are removed before feature matrix construction. Fourth, the original label column is used only to derive binary targets and is then dropped, ensuring it cannot enter any feature set in encoded form. Without this correction, the encoded label would make several CICIDS-2018 attack categories trivially separable and inflate the measured scores, masking the genuine distributional overlap that makes the dataset a meaningful stress test.

Campaign-level correlation remains an important caveat. Stratified random CV allows flows from the same attack campaign to appear in both train and test folds, inflating performance relative to real deployment. The temporal split (Section~\ref{sec:timesplit}) is designed to quantify this optimism on CICIDS-2017. Host-based and session-level isolation would offer stricter guarantees and are recommended as evaluation standards for future work.

\section{Results}\label{sec:results}
\vspace{0.5em}

\subsection{Ablation Study}

Table~\ref{tab:ablation} reports weighted F1, Precision, and Recall for each feature condition; Table~\ref{tab:fullmetrics} adds DR, FAR, Acc, MCC, AUC, and PR-AUC for the combined condition. On CICIDS-2018, combined F1$\approx$0.74 masks a DR of only 0.43 to 0.48 with FAR between 0.12 and 0.15, meaning fewer than half of attacks are detected; none of this is visible from the aggregate score alone. On CICIDS-2017 entropy-only, DR$=$0.984 comes at FAR$=$0.054, flagging 5.4\% of benign flows. This precision-recall trade-off is absorbed without trace by the aggregate F1 score. On UNSW-NB15, multi-layer perceptron (MLP) achieves F1$=$0.940 but DR$=$0.561 (Table~\ref{tab:inpipeline}), inflated by the 92.6\% benign majority. These three cases motivate the full metric disclosure used throughout.

\begin{table*}[t]
\centering
\caption{Ablation results: weighted $F_1$, Precision (P), and Recall (R) for each feature condition, 5-fold stratified CV. Bold: best $F_1$ per dataset. $n_f$: feature counts (conv/ent/comb). Full 9-metric results (DR, FAR, Acc, MCC, AUC, PR-AUC) for the combined condition appear in Table~\ref{tab:fullmetrics}.}
\label{tab:ablation}
\footnotesize
\setlength{\tabcolsep}{3pt}
\resizebox{\textwidth}{!}{%
\begin{tabular}{ll r  rrr  rrr  rrr}
\toprule
 & & & \multicolumn{3}{c}{\textbf{Conventional}} & \multicolumn{3}{c}{\textbf{Entropy-only}} & \multicolumn{3}{c}{\textbf{Combined}} \\
\cmidrule(lr){4-6}\cmidrule(lr){7-9}\cmidrule(lr){10-12}
\textbf{Dataset} & \textbf{Model} & $n_f$ & F1 & P & R & F1 & P & R & F1 & P & R \\
\midrule
NSL-KDD     & LightGBM     & 41/7/48  & 0.9873 & 0.9873 & 0.9873 & 0.9788 & 0.9788 & 0.9788 & \textbf{0.9875} & 0.9875 & 0.9875 \\
            & RandomForest &          & 0.9851 & 0.9852 & 0.9851 & 0.9758 & 0.9759 & 0.9758 & \textbf{0.9857} & 0.9858 & 0.9857 \\
\midrule
CICIDS-2017 & LightGBM     & 78/12/90 & \textbf{0.9989} & 0.9989 & 0.9989 & 0.9551 & 0.9607 & 0.9536 & \textbf{0.9989} & 0.9989 & 0.9989 \\
            & RandomForest &          & \textbf{0.9977} & 0.9977 & 0.9977 & 0.9463 & 0.9535 & 0.9442 & \textbf{0.9977} & 0.9977 & 0.9977 \\
\midrule
CICIDS-2018 & LightGBM     & 78/12/90 & 0.7396 & 0.7359 & 0.7461 & 0.7084 & 0.7035 & 0.7171 & \textbf{0.7405} & 0.7368 & 0.7467 \\
            & RandomForest &          & 0.7442 & 0.7413 & 0.7567 & 0.6973 & 0.6947 & 0.7002 & \textbf{0.7443} & 0.7415 & 0.7571 \\
\midrule
UNSW-NB15   & LightGBM     & 41/11/52 & 0.9926 & 0.9930 & 0.9925 & 0.9887 & 0.9896 & 0.9883 & \textbf{0.9927} & 0.9930 & 0.9926 \\
            & RandomForest &          & 0.9898 & 0.9907 & 0.9895 & 0.9875 & 0.9888 & 0.9871 & \textbf{0.9900} & 0.9909 & 0.9897 \\
\bottomrule
\end{tabular}%
}
\end{table*}

\begin{table*}[t]
\centering
\caption{Full operational metrics for the \textbf{combined} feature set (5-fold stratified CV, fold-local pipeline). DR $=$ TP/(TP+FN); FAR $=$ FP/(FP+TN); MCC $=$ Matthews Correlation Coefficient. Bold: best F1 per dataset. CICIDS-2018 is the critical case: F1$\approx$0.74 conceals DR$\approx$0.48, meaning fewer than half of all attacks are detected.}
\label{tab:fullmetrics}
\footnotesize
\setlength{\tabcolsep}{3pt}
\begin{tabular}{ll rrrrrrrrr}
\toprule
\textbf{Dataset} & \textbf{Model} & F1 & Prec & Rec & DR & FAR & Acc & MCC & AUC & PR-AUC \\
\midrule
NSL-KDD     & LightGBM     & \textbf{0.9875} & 0.9875 & 0.9875 & 0.9883 & 0.0136 & 0.9875 & 0.9745 & 0.9996 & 0.9997 \\
            & RandomForest & 0.9857 & 0.9858 & 0.9857 & 0.9844 & 0.0126 & 0.9857 & 0.9709 & 0.9994 & 0.9995 \\
\midrule
CICIDS-2017 & LightGBM     & \textbf{0.9989} & 0.9989 & 0.9989 & 0.9983 & 0.0010 & 0.9989 & 0.9964 & 0.9999 & 0.9997 \\
            & RandomForest & 0.9977 & 0.9977 & 0.9977 & 0.9953 & 0.0017 & 0.9977 & 0.9929 & 0.9998 & 0.9993 \\
\midrule
CICIDS-2018 & LightGBM     & 0.7405 & 0.7368 & 0.7467 & \textbf{0.4784} & 0.1483 & 0.7467 & 0.3470 & 0.7342 & 0.6107 \\
            & RandomForest & \textbf{0.7443} & 0.7415 & 0.7571 & 0.4334 & \textbf{0.1163} & 0.7571 & 0.3529 & 0.7231 & 0.5999 \\
\midrule
UNSW-NB15   & LightGBM     & \textbf{0.9927} & 0.9930 & 0.9926 & 0.9895 & 0.0072 & 0.9926 & 0.9484 & 0.9996 & 0.9950 \\
            & RandomForest & 0.9900 & 0.9909 & 0.9897 & 0.9971 & 0.0108 & 0.9897 & 0.9317 & 0.9995 & 0.9933 \\
\bottomrule
\end{tabular}
\end{table*}

\begin{figure*}[t]
  \centering
  \includegraphics[width=\textwidth]{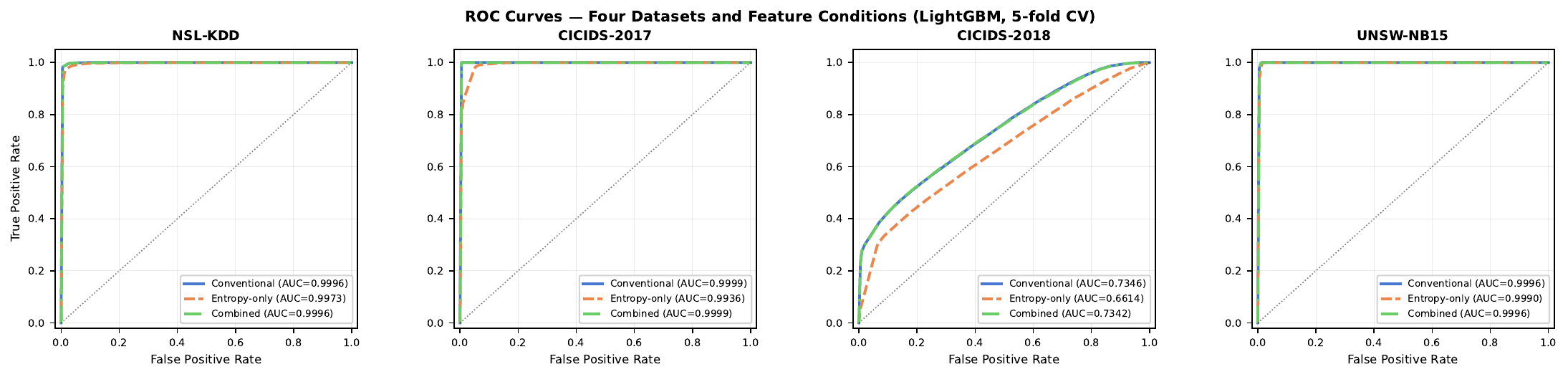}
  \caption{ROC curves for all four datasets across three feature conditions (LightGBM, 5-fold CV). Solid blue: conventional; dashed amber: entropy-only; dash-dot green: combined. AUC values are means over folds. CICIDS-2018 shows the lowest AUC ($\approx$0.73), consistent with genuine feature-space ambiguity between attack categories. Combined condition matches or exceeds conventional on all datasets, confirming entropy features add discriminative ranking ability without degradation.}
  \label{fig:roc_all}
\end{figure*}

\subsection{Statistical Significance}\label{sec:significance}

To quantify whether observed F1 differences are statistically reliable, we compute 95\% confidence intervals from the 5-fold CV scores using the $t$-distribution ($t_{4,0.025}=2.776$) and apply Wilcoxon signed-rank tests~\cite{wilcoxon1945} for paired comparisons. With $k=5$ folds, the minimum achievable Wilcoxon $p$-value is $1/2^4=0.0625$; we adopt a one-sided significance criterion of $p\leq0.0625$ for this sample size.

Combining MDE entropy with conventional features produces no statistically significant F1 change on any tested dataset ($p\geq0.625$, confidence intervals overlap). 95\% CIs for conventional and combined conditions are: NSL-KDD $[0.985, 0.990]$ / $[0.985, 0.990]$; CICIDS-2017 $[0.999, 0.999]$ / $[0.999, 0.999]$; UNSW-NB15 $[0.992, 0.993]$ / $[0.992, 0.993]$. Entropy features do not degrade conventional performance.

In contrast, the entropy-only condition achieves lower F1 than conventional on all three tested datasets in every fold (Wilcoxon $p=0.0625$, the minimum achievable value, indicating a consistent directional difference). 95\% CIs: NSL-KDD entropy $[0.977, 0.980]$ vs.\ conventional $[0.985, 0.990]$; CICIDS-2017 entropy $[0.954, 0.956]$ vs.\ conventional $[0.999, 0.999]$; UNSW-NB15 entropy $[0.988, 0.989]$ vs.\ conventional $[0.992, 0.993]$. The gap is smallest on UNSW-NB15 (0.4 percentage points) and largest on CICIDS-2017 (4.4 pp), consistent with the known artifact-driven separability of CICIDS-2017 by packet-volume features.

\subsection{Entropy-Only Results}

The significance tests above establish that entropy features neither improve nor degrade combined F1 relative to conventional features. The entropy-only condition, though weaker in absolute F1, is more revealing. Its per-dataset variation exposes where the MDE signal is strong and where it is limited by dataset structure. Fig.~\ref{fig:ablation} visualises the full ablation comparison.

UNSW-NB15 yields the strongest entropy-only performance (F1 $=$ 0.9887/0.9875, DR $=$ 0.989/0.991), attributable to its rich jitter, load, and TTL diversity features that map directly to L1 ADE and L2 JSD. NSL-KDD achieves F1 $=$ 0.9788/0.9758 (DR $=$ 0.977/0.974), where MDE's connection-state entropy captures the structured probe-and-scan signatures characteristic of KDD-style attacks. CICIDS-2018 entropy-only F1 $=$ 0.708/0.697 with DR $=$ 0.41--0.44 reflects a genuine dataset property. Multiple attack categories share nearly identical flow-level distributional profiles, differing primarily in absolute magnitude rather than structural shape. Entropy features respond to distributional structure, not magnitude, so they cannot resolve categories whose discriminative signal lies elsewhere. The combined condition improves only marginally (F1 $=$ 0.740/0.744), and neither feature family alone nor their combination overcomes this intra-class ambiguity. It is a dataset property, not a modeling failure.

\begin{figure}[t]
  \centering
  \includegraphics[width=\linewidth]{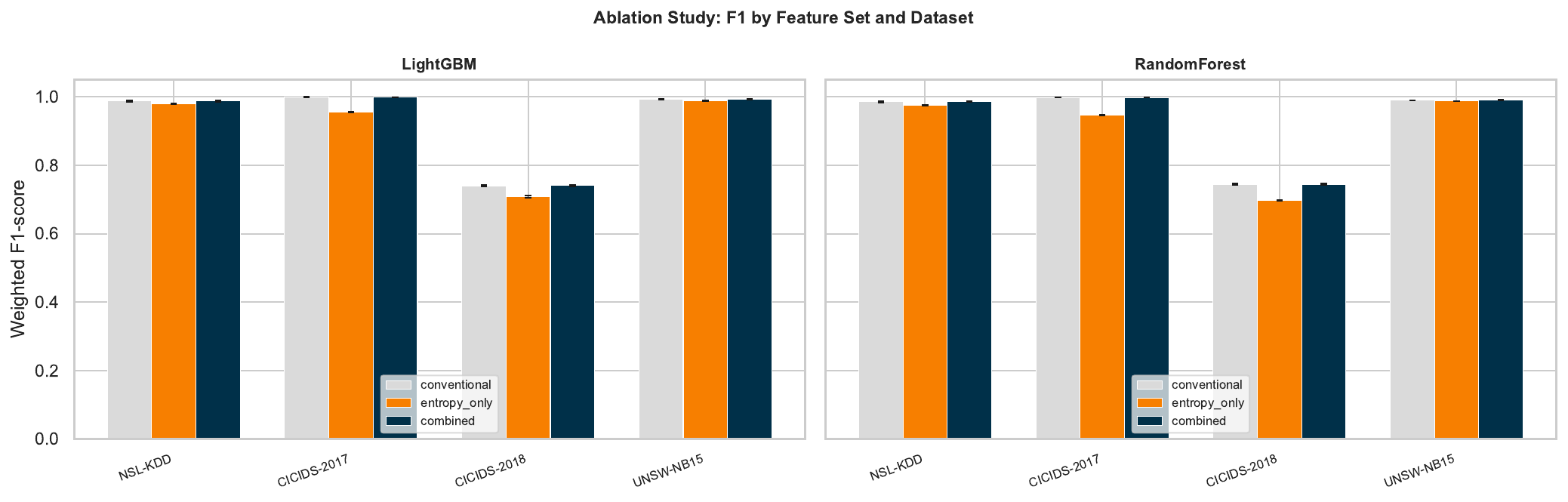}
  \caption{Ablation study: weighted F1-score per feature set and dataset. Gray: conventional features; amber: entropy-only (MDE); navy: combined.}
  \label{fig:ablation}
\end{figure}

\subsection{Temporal Generalization}\label{sec:timesplit}

To evaluate temporal generalization, we train on Monday--Thursday files (150K stratified sample, 12 attack types) and test on the Friday files (100K stratified sample), which contain DDoS and PortScan, attack types that were only partially represented in the training period. Table~\ref{tab:timesplit} reports the results.

\begin{table}[t]
\centering
\caption{CICIDS-2017 temporal evaluation: Mon--Thu training (150K, debiased), Friday test (100K). All metrics weighted except DR $=$ TP/(TP+FN) and FAR $=$ FP/(FP+TN). Bold: best per column.}
\label{tab:timesplit}
\footnotesize
\setlength{\tabcolsep}{2.5pt}
\resizebox{\linewidth}{!}{%
\begin{tabular}{ll rrrrrrrrr}
\toprule
\textbf{Model} & \textbf{Ablation} & F1 & Prec & Rec & DR & FAR & MCC & AUC & PR-AUC \\
\midrule
LightGBM     & conventional  & 0.5082 & 0.7684 & 0.6226 & 0.0816 & 0.0002 & 0.2220 & 0.8435 & 0.7810 \\
             & entropy\_only & 0.5364 & 0.6934 & 0.6289 & 0.1280 & 0.0218 & 0.2112 & 0.7324 & 0.6290 \\
             & combined      & 0.5080 & 0.7685 & 0.6225 & 0.0814 & 0.0002 & 0.2218 & \textbf{0.8716} & \textbf{0.8066} \\
\midrule
RandomForest & conventional  & 0.6498 & 0.8017 & 0.7035 & 0.2790 & 0.0005 & 0.4296 & 0.8106 & 0.7665 \\
             & entropy\_only & 0.4311 & 0.3460 & 0.5767 & 0.0001 & 0.0212 & $-$0.0931 & 0.6133 & 0.5854 \\
             & combined      & \textbf{0.6514} & \textbf{0.8021} & \textbf{0.7045} & \textbf{0.2815} & 0.0006 & \textbf{0.4317} & 0.8184 & 0.7703 \\
\bottomrule
\end{tabular}%
}
\end{table}

Table~\ref{tab:timesplit} shows that all models fail to detect the majority of attacks under distribution shift, and F1 substantially overstates operational utility. LightGBM's DR collapses to 0.082 under shift despite weighted F1$\approx$0.51, because the model defaults to predicting benign. It detects fewer than 1 in 12 attacks. The best result comes from Random Forest (RF) combined (DR $=$ 0.2815, MCC $=$ 0.4317, FAR $=$ 0.0006), which still misses more than 70\% of attacks while keeping false alarms near zero. RF entropy-only is the clearest failure: DR$\approx$0.000, MCC $= -0.093$ (below-random correlation), and FAR $=$ 0.021, a situation where the model generates false alarms on benign traffic while detecting almost no attacks, a catastrophic inversion that would be invisible if only F1 $=$ 0.43 were reported. The one encouraging finding is that entropy features preserve discrimination at the score level. Combined LightGBM achieves AUC $=$ 0.8716 and PR-AUC $=$ 0.8066 versus 0.8435/0.7810 for conventional, confirming that entropy scores retain relative ordering ability under shift even when fixed classification thresholds break down. This divergence between threshold performance (DR, MCC) and ranking performance (AUC) indicates that entropy score ordering is partially preserved under shift even when fixed classification thresholds fail.

\subsection{Pseudo-Live Temporal Replay}\label{sec:replay}

The batch temporal evaluation (Section~\ref{sec:timesplit}) characterizes aggregate performance after training. To assess operational behavior under a realistic inference scenario, we implement a pseudo-live temporal replay. The complete pipeline (imputer, clipper, LightGBM) is trained once on Monday--Thursday traffic (150K flows), all parameters are frozen, and the full Friday traffic (703K flows) is then replayed chronologically without labels, retraining, or threshold adaptation. Friday flows are sorted by their original timestamps and divided into $N=20$ equal windows; metrics are computed retrospectively in each window using ground-truth labels only after replay completes. We compare two threshold strategies: fixed ($\theta=0.5$) and Youden's-J optimal threshold derived from training-set predicted probabilities.

Table~\ref{tab:replay} summarizes overall results across feature conditions. DR collapses to 0.08--0.13 across all conditions, consistent with the batch evaluation, while AUC remains 0.73--0.87. Critically, threshold recalibration provides no improvement. The Youden threshold (0.716 for combined) is derived from a training distribution where attacks receive high probabilities, but under temporal shift the score distribution drifts downward, placing attack flows below any reasonable decision boundary regardless of threshold choice. DR with $\theta_\text{Youden}$ is within $0.001$ of fixed $\theta=0.5$ in every condition.

\begin{table}[h]
\centering
\caption{Pseudo-live temporal replay summary: LightGBM, full Friday traffic (703K flows, 20 windows). DR and FAR under fixed $\theta=0.5$ and Youden's-J threshold. AUC is threshold-independent.}
\label{tab:replay}
\footnotesize
\setlength{\tabcolsep}{3.5pt}
\resizebox{\linewidth}{!}{%
\begin{tabular}{l r r r r r r}
\toprule
\textbf{Feature set} & $\theta_\text{Youden}$ & DR$_{\theta=0.5}$ & DR$_{\theta_Y}$ & FAR$_{\theta=0.5}$ & MCC & AUC \\
\midrule
Conventional & 0.716 & 0.081 & 0.081 & 0.000 & 0.221 & 0.842 \\
Entropy-only & 0.504 & 0.125 & 0.125 & 0.022 & 0.208 & 0.732 \\
Combined     & 0.714 & 0.081 & 0.081 & 0.000 & 0.221 & \textbf{0.871} \\
\bottomrule
\end{tabular}%
}
\end{table}

The window-by-window breakdown (Figure~\ref{fig:temporal_replay}) reveals a pattern invisible in aggregate results. Windows 3--8 correspond to a DDoS surge period in which 50--99\% of flows are attacks. In these windows, DR falls to zero across all conditions while AUC remains 0.70--0.95: the classifier correctly ranks attacks above benign flows in relative terms, but the absolute predicted scores shift below the decision boundary, causing complete operational failure. Windows 9--14, dominated by PortScan traffic with a more heterogeneous attack signature, show partial recovery (DR 0.09--0.36, AUC 0.82--0.99). This two-phase structure illustrates that the AUC-DR divergence identified in Section~\ref{sec:timesplit} is not a global averaging artifact but a flow-period-specific collapse. During high-volume homogeneous attack bursts, score calibration degrades while discrimination is preserved.

\begin{figure*}[t]
\centering
\includegraphics[width=\textwidth]{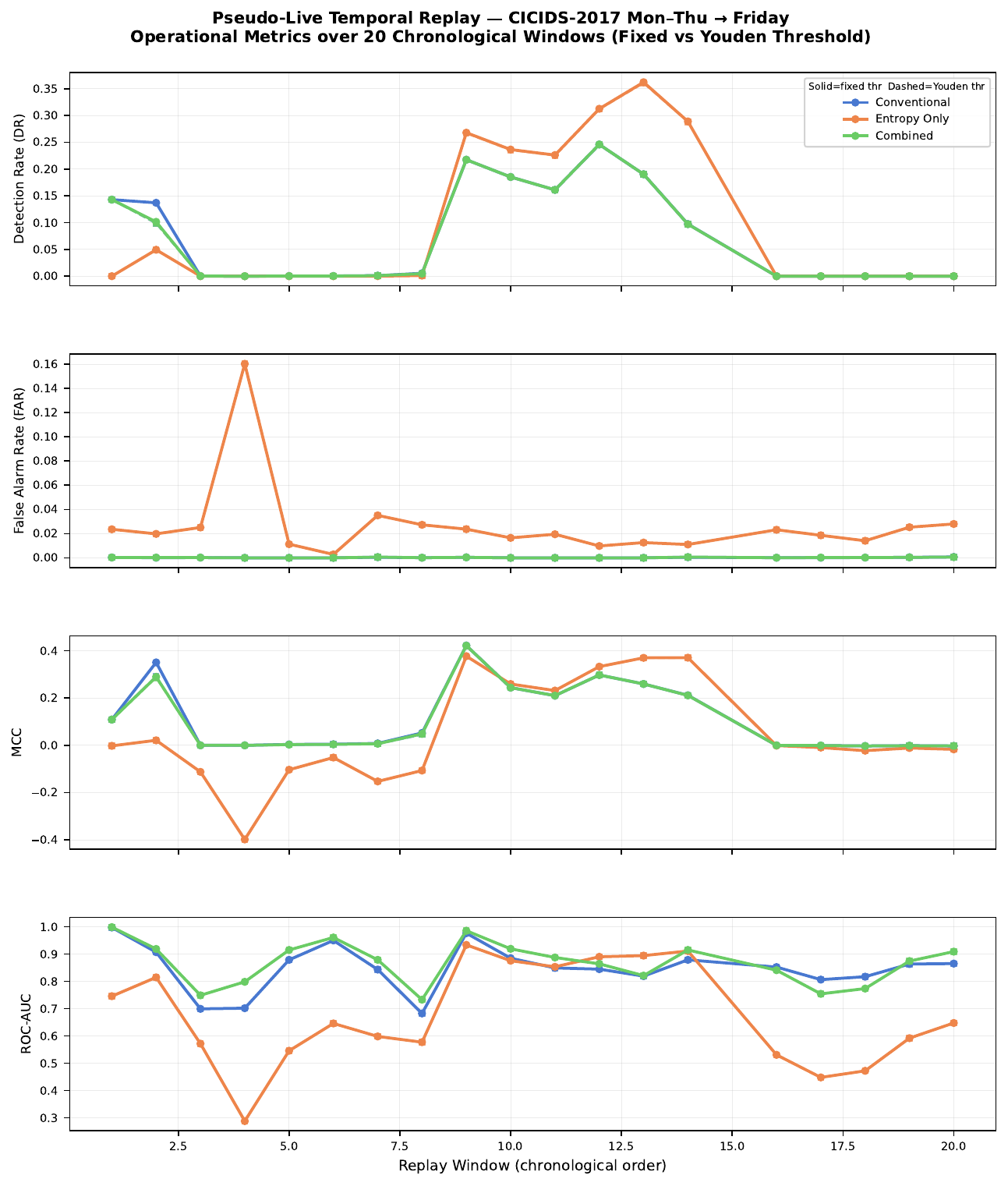}
\caption{Pseudo-live temporal replay: operational metrics over 20 chronological Friday windows (CICIDS-2017). Solid lines: fixed threshold ($\theta=0.5$); dashed lines: Youden's-J threshold. Windows 3--8 correspond to the DDoS surge period; windows 9--14 to PortScan. DR collapses to zero during the surge while AUC reaches 0.95, demonstrating that score calibration fails before discrimination does.}
\label{fig:temporal_replay}
\end{figure*}

\subsection{Baseline Comparison}\label{sec:baselines}

Table~\ref{tab:inpipeline} situates within-distribution performance of MDE against a broad set of classifiers. A fair comparison requires identical train/test splits, identical preprocessing, and identical metrics. Table~\ref{tab:inpipeline} reports 5-fold CV results for seven classifiers on the combined MDE feature set across all four datasets: three gradient-boosted tree methods (LightGBM, Random Forest, XGBoost), CatBoost~\cite{prokhorenkova2018catboost}, and three neural approaches (MLP, TabNet~\cite{arik2021tabnet}, FT-Transformer~\cite{gorishniy2021fttransformer}). MLP, TabNet, and FT-Transformer use a stratified 30K-sample subset for computational feasibility; tree-based classifiers run on full data. All models use the identical leakage-free pipeline (fold-local imputation and percentile clipping). Reporting the full suite of operational metrics, including DR, FAR, MCC, and F1, exposes critical gaps that aggregate scores alone would conceal. On UNSW-NB15, MLP achieves F1 $=$ 0.9417 but DR $=$ 0.5809, inflated by the heavily benign majority; TabNet (DR $=$ 0.9747) and FT-Transformer (DR $=$ 0.9997) recover substantially through their attention mechanisms on this dataset. Gradient-boosted classifiers (LightGBM, XGBoost, CatBoost) consistently achieve DR $\geq$ 0.96 on NSL-KDD, CICIDS-2017, and UNSW-NB15, with MCC $\geq$ 0.93 on those three datasets. CICIDS-2018 is the critical stress test. Its genuine intra-class ambiguity exposes the limits of every approach. No classifier achieves DR above 0.48, MCC falls below 0.43 for all models, and deep learning methods (MLP DR $=$ 0.1688, TabNet DR $=$ 0.2178) underperforming tree-based methods on detection rate under this harder distribution. XGBoost achieves the highest F1 on CICIDS-2018 (0.7467), NSL-KDD (0.9878, tied with CatBoost), and UNSW-NB15 (0.9935), while LightGBM leads on CICIDS-2017 (0.9989). FT-Transformer~\cite{gorishniy2021fttransformer} and TabNet~\cite{arik2021tabnet} are competitive on well-balanced datasets but require substantially more computation for modest F1 gains over gradient-boosted trees on tabular network flow data.

The consistent dominance of gradient-boosted trees (GBTs) observed here reflects structural properties of pre-aggregated flow records~\cite{grinsztajn2022tree}. Network flow features are heterogeneous in scale and heavily skewed; GBTs handle this natively through recursive splitting that is invariant to monotone transformations, requiring no normalization beyond the fold-local percentile clipping applied here. High-order feature interactions (e.g.\ joint packet-size and direction signals) are captured greedily without manual feature construction. Deep architectures benefit most when inputs exhibit local correlations, spatial regularity, or smooth temporal structure, and none of these properties hold for flow-level statistical summaries. Attention-based models (FT-Transformer, TabNet) address some of these limitations through learned per-feature embeddings and attention over feature tokens, narrowing the gap on balanced, well-structured datasets (NSL-KDD, UNSW-NB15 in Table~\ref{tab:inpipeline}), but have not consistently surpassed GBTs on datasets with fewer than $10^6$ samples and strong class imbalance~\cite{grinsztajn2022tree}. The CICIDS-2018 results are illustrative. No deep model improves on F1 or DR relative to the best tree-based classifier, and MLP degrades substantially (DR$=$0.169 vs.\ 0.478 for LightGBM).

\begin{table*}[t]
\centering
\caption{Classifier comparison: seven models on the combined MDE feature set, identical 5-fold stratified CV, leakage-free fold-local pipeline. MLP, TabNet, FT-Transformer use 30K stratified subsample ($\dagger$). DR $=$ TP/(TP$+$FN); FAR $=$ FP/(FP$+$TN). Bold: best F1 per dataset. Gradient-boosted trees run on full dataset; deep models on 30K subsample for computational feasibility.}
\label{tab:inpipeline}
\footnotesize
\setlength{\tabcolsep}{2.5pt}
\resizebox{\textwidth}{!}{%
\begin{tabular}{ll rrrrrrrr}
\toprule
\textbf{Dataset} & \textbf{Classifier} & F1 & Prec & Rec & DR & FAR & Acc & MCC & PR-AUC \\
\midrule
NSL-KDD   & LightGBM           & 0.9875 & 0.9875 & 0.9875 & 0.9883 & 0.0136 & 0.9875 & 0.9745 & 0.9997 \\
          & Random Forest       & 0.9857 & 0.9858 & 0.9857 & 0.9844 & 0.0126 & 0.9857 & 0.9709 & 0.9995 \\
          & XGBoost             & \textbf{0.9878} & 0.9878 & 0.9878 & 0.9890 & 0.0138 & 0.9878 & 0.9751 & 0.9997 \\
          & CatBoost            & \textbf{0.9878} & 0.9878 & 0.9878 & 0.9873 & 0.0116 & 0.9878 & 0.9751 & 0.9996 \\
          & MLP$^\dagger$       & 0.9697 & 0.9698 & 0.9697 & 0.9766 & 0.0393 & 0.9697 & 0.9384 & 0.9813 \\
          & TabNet$^\dagger$    & 0.9705 & 0.9705 & 0.9705 & 0.9750 & 0.0355 & 0.9705 & 0.9398 & 0.9967 \\
          & FT-Transformer$^\dagger$ & 0.9818 & 0.9818 & 0.9818 & 0.9836 & 0.0206 & 0.9818 & 0.9629 & 0.9989 \\
\midrule
CICIDS-2017 & LightGBM         & \textbf{0.9989} & 0.9989 & 0.9989 & 0.9983 & 0.0010 & 0.9989 & 0.9964 & 0.9997 \\
            & Random Forest     & 0.9977 & 0.9977 & 0.9977 & 0.9953 & 0.0017 & 0.9977 & 0.9929 & 0.9993 \\
            & XGBoost           & 0.9987 & 0.9987 & 0.9987 & 0.9975 & 0.0011 & 0.9987 & 0.9958 & 0.9997 \\
            & CatBoost          & 0.9982 & 0.9983 & 0.9982 & 0.9986 & 0.0018 & 0.9982 & 0.9945 & 0.9994 \\
            & MLP$^\dagger$     & 0.9369 & 0.9382 & 0.9362 & 0.8710 & 0.0478 & 0.9362 & 0.8040 & 0.7816 \\
            & TabNet$^\dagger$  & 0.9687 & 0.9693 & 0.9692 & 0.8861 & 0.0104 & 0.9692 & 0.9012 & 0.9711 \\
            & FT-Transformer$^\dagger$ & 0.9728 & 0.9746 & 0.9723 & 0.9832 & 0.0304 & 0.9723 & 0.9177 & 0.9899 \\
\midrule
CICIDS-2018 & LightGBM         & 0.7405 & 0.7368 & 0.7467 & 0.4784 & 0.1483 & 0.7467 & 0.3470 & 0.6107 \\
            & Random Forest     & 0.7443 & 0.7415 & 0.7571 & 0.4334 & 0.1163 & 0.7571 & 0.3529 & 0.5999 \\
            & XGBoost           & \textbf{0.7467} & 0.8064 & 0.7881 & 0.2867 & 0.0159 & 0.7881 & 0.4212 & 0.6092 \\
            & CatBoost          & 0.7429 & 0.7397 & 0.7551 & 0.4355 & 0.1199 & 0.7551 & 0.3492 & 0.6000 \\
            & MLP$^\dagger$     & 0.6795 & 0.7121 & 0.7401 & 0.1688 & 0.0364 & 0.7401 & 0.2203 & 0.4309 \\
            & TabNet$^\dagger$  & 0.7134 & 0.7825 & 0.7668 & 0.2178 & 0.0184 & 0.7668 & 0.3435 & 0.5111 \\
            & FT-Transformer$^\dagger$ & 0.7190 & 0.7241 & 0.7392 & 0.3624 & 0.1135 & 0.7392 & 0.2975 & 0.5067 \\
\midrule
UNSW-NB15  & LightGBM          & 0.9927 & 0.9930 & 0.9926 & 0.9895 & 0.0072 & 0.9926 & 0.9484 & 0.9950 \\
           & Random Forest      & 0.9900 & 0.9909 & 0.9897 & 0.9971 & 0.0108 & 0.9897 & 0.9317 & 0.9933 \\
           & XGBoost            & \textbf{0.9935} & 0.9936 & 0.9935 & 0.9626 & 0.0040 & 0.9935 & 0.9531 & 0.9942 \\
           & CatBoost           & 0.9897 & 0.9906 & 0.9893 & 0.9984 & 0.0114 & 0.9893 & 0.9293 & 0.9933 \\
           & MLP$^\dagger$      & 0.9417 & 0.9410 & 0.9426 & 0.5809 & 0.0285 & 0.9426 & 0.5693 & 0.3940 \\
           & TabNet$^\dagger$   & 0.9867 & 0.9877 & 0.9863 & 0.9747 & 0.0128 & 0.9863 & 0.9082 & 0.9393 \\
           & FT-Transformer$^\dagger$ & 0.9881 & 0.9894 & 0.9876 & 0.9997 & 0.0133 & 0.9876 & 0.9194 & 0.9858 \\
\bottomrule
\end{tabular}%
}
\end{table*}

\begin{figure*}[t]
  \centering
  \includegraphics[width=\textwidth]{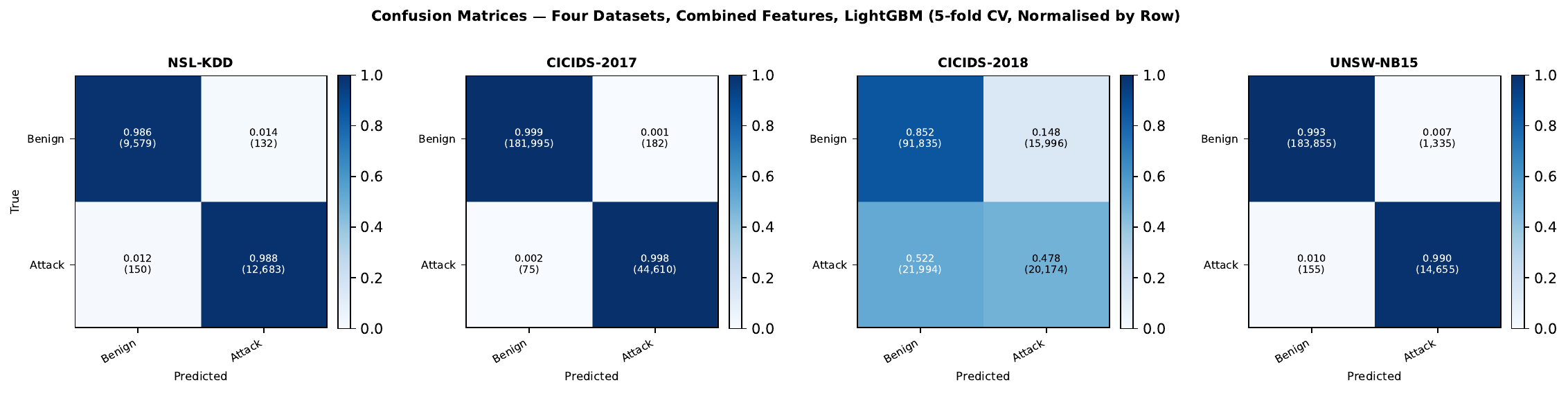}
  \caption{Normalized confusion matrices (rows $=$ true label, columns $=$ predicted label) for four datasets using LightGBM with combined features (5-fold CV, pooled predictions). Cell values show row-normalized proportion and raw count. Diagonal dominance confirms high accuracy on NSL-KDD, CICIDS-2017, and UNSW-NB15. CICIDS-2018 shows the most off-diagonal mass, with approximately 52\% of attacks (class 1) misclassified as benign, which aligns with the DR $\approx$ 0.48 reported in Table~\ref{tab:ablation}.}
  \label{fig:cm_all}
\end{figure*}

Table~\ref{tab:baseline} situates MDE-IDS against published methods. Because those studies use different datasets, preprocessing pipelines, and evaluation protocols, the comparison is indicative only; the in-pipeline Table~\ref{tab:inpipeline} is the scientifically fair reference. MDE-IDS provides SHAP-based explainability, which is not reported by any baseline method.

\begin{table}[t]
\centering
\caption{Cross-literature context (indicative only; different protocols). MDE-IDS (LGB) = LightGBM combined; Met.\ = reported metric.}
\label{tab:baseline}
\footnotesize
\setlength{\tabcolsep}{3pt}
\resizebox{\linewidth}{!}{%
\begin{tabular}{l p{3.0cm} l r}
\toprule
\textbf{Dataset} & \textbf{Method} & \textbf{Met.} & \textbf{Score} \\
\midrule
NSL-KDD  & XGBoost~\cite{tama2019indepth}          & F1  & 0.9884 \\
NSL-KDD  & Deep NN~\cite{faker2019intrusion}       & Acc & 0.9910 \\
NSL-KDD  & \textbf{MDE-IDS (LGB)}                 & \textbf{F1}  & \textbf{0.9875} \\
\midrule
CIC-2017 & CNN-LSTM~\cite{liu2022cnn}              & F1  & 0.9780 \\
CIC-2017 & RF~\cite{engelen2021troubleshooting}    & Acc & 0.9997 \\
CIC-2017 & \textbf{MDE-IDS (LGB)}                 & \textbf{F1}  & \textbf{0.9989} \\
\midrule
UNSW-NB15 & Wrap-NN~\cite{kasongo2020deep}         & Acc & 0.9750 \\
UNSW-NB15 & Transformer~\cite{wu2022rtids}         & F1  & 0.9812 \\
UNSW-NB15 & \textbf{MDE-IDS (LGB)}                & \textbf{F1}  & \textbf{0.9927} \\
\bottomrule
\end{tabular}}
\end{table}

\subsection{SHAP Waterfall Analysis}

The attribution structure of MDE features (which entropy features drive predictions and whether attributions are consistent across datasets) is examined below. We first verify leakage-free attribution, then assess cross-dataset rank stability, and finally examine instance-level waterfall plots.

All SHAP figures were generated from the leakage-free pipeline (label column removed from feature matrices before SHAP computation). Feature name audits confirmed that no label-related column appears in any SHAP plot. Each dataset passed a programmatic check verifying that no feature name contains the strings \texttt{label}, \texttt{class}, \texttt{attack}, \texttt{binary}, or \texttt{multi}.

Formal cross-dataset Spearman correlation is not feasible because each dataset's conventional features carry schema-specific column names (e.g., \texttt{Total\_Length\_of\_Fwd\_Packets} in CICIDS vs.\ \texttt{src\_bytes} in NSL-KDD), leaving fewer than three features in common by name across any pair. We therefore assess stability via the \emph{MDE top-10 fraction}: the proportion of the top-10 SHAP features (by mean absolute SHAP value on the full training set) that are MDE-type (L1 ADE, L2 JSD, or L3 flag entropy). Across four datasets:

\begin{center}
\footnotesize
\begin{tabular}{lrr}
\toprule
Dataset & MDE in top-10 & Fraction \\
\midrule
CICIDS-2017 & 4/10 & 40\% \\
UNSW-NB15   & 5/10 & 50\% \\
NSL-KDD     & 3/10 & 30\% \\
CICIDS-2018 & 3/10 & 30\% \\
\bottomrule
\end{tabular}
\end{center}

\noindent The MDE fraction ranges 30--50\% across all four environments, despite heterogeneous feature schemas and attack taxonomies. Given that MDE contributes only 7--12 features out of 41--90 total (8--29\% by count), a consistent 30--50\% share of top-10 attributions indicates that entropy features are disproportionately represented in SHAP importance rankings across structurally distinct environments.

Fig.~\ref{fig:waterfall_cicids17} shows the SHAP waterfall plot for the most confidently classified attack instance in CICIDS-2017 (LightGBM, combined features). A waterfall plot decomposes the model output for a single instance. Starting from the expected model output $E[f(X)]$ (base value), each bar shows how a specific feature's value pushes the prediction upward (toward attack) or downward (toward benign), arriving at the final model output $f(x)$. The cross-directional JSD (\texttt{jsd\_pkt\_len} $= 0.693$, the maximum possible, indicating fully unidirectional traffic) contributes the largest positive SHAP value, identifying a near-zero backward traffic volume, a signature consistent with DDoS or port-scan activity. The directional balance entropy (\texttt{dir\_entropy\_pkts} $\approx 0$) and flag entropy (\texttt{flag\_entropy}) both push the prediction toward attack, while conventional features such as \texttt{Flow Duration} and packet length statistics provide complementary marginal contributions.

\begin{figure}[t]
  \centering
  \includegraphics[width=\linewidth]{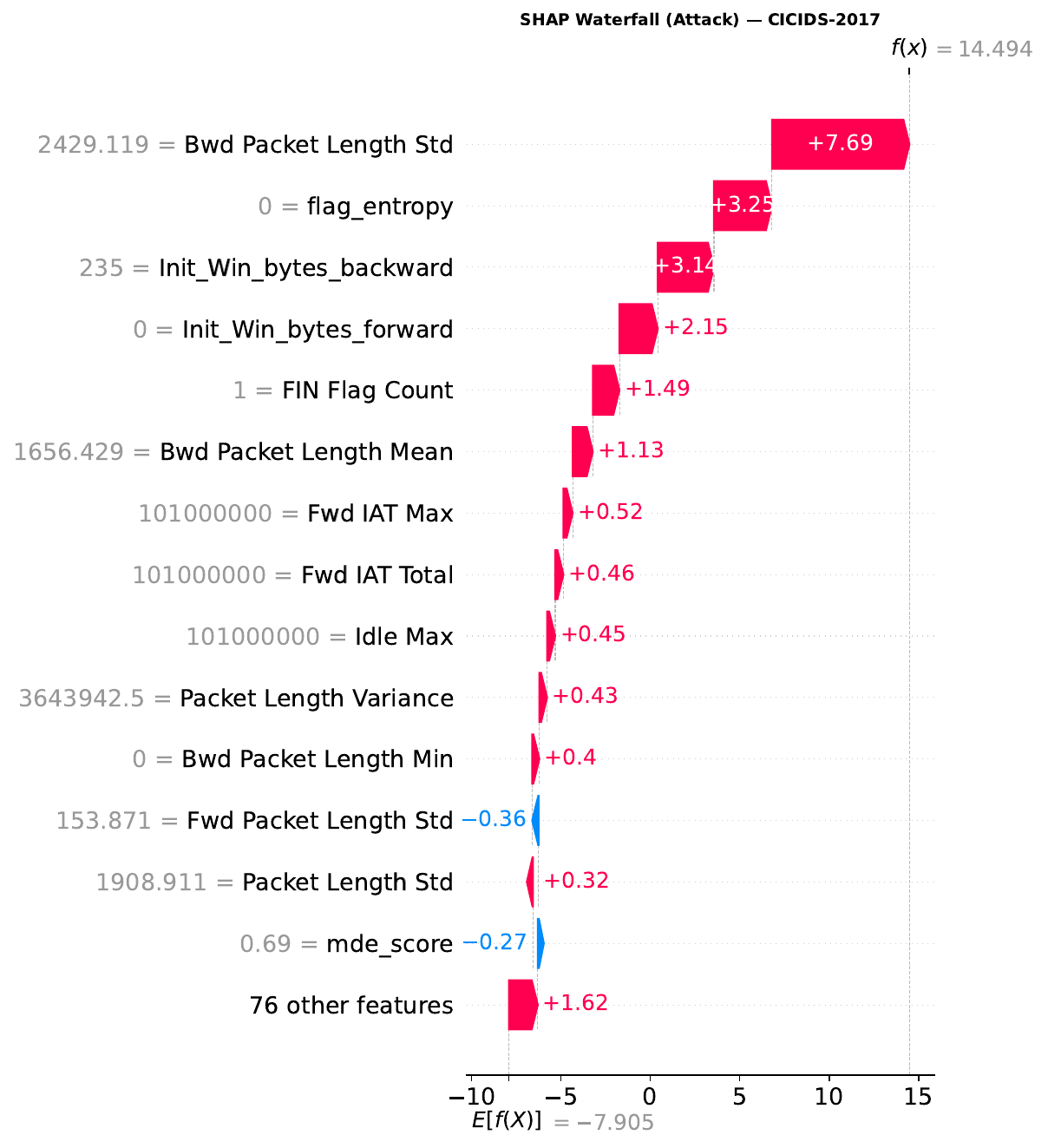}
  \caption{SHAP waterfall plot for a representative attack instance in CICIDS-2017 (LightGBM, combined features). Each bar shows the SHAP contribution of one feature; MDE features (\texttt{jsd\_pkt\_len}, \texttt{dir\_entropy\_pkts}, \texttt{flag\_entropy}) rank among the strongest drivers of the attack prediction.}
  \label{fig:waterfall_cicids17}
\end{figure}

Fig.~\ref{fig:waterfall_unsw} shows the corresponding waterfall for UNSW-NB15. The \texttt{ade\_src\_iat} feature (Gaussian differential entropy of source inter-arrival times) dominates. A low value indicates highly regular, periodic packet timing typical of automated attack tools such as scanners and bots, whereas human-driven sessions exhibit higher IAT variability. The \texttt{jsd\_pkt\_sz} feature (cross-directional size JSD) further confirms that the flagged flow has structurally different source and destination packet distributions, consistent with a probe-response asymmetry.

\begin{figure}[t]
  \centering
  \includegraphics[width=\linewidth]{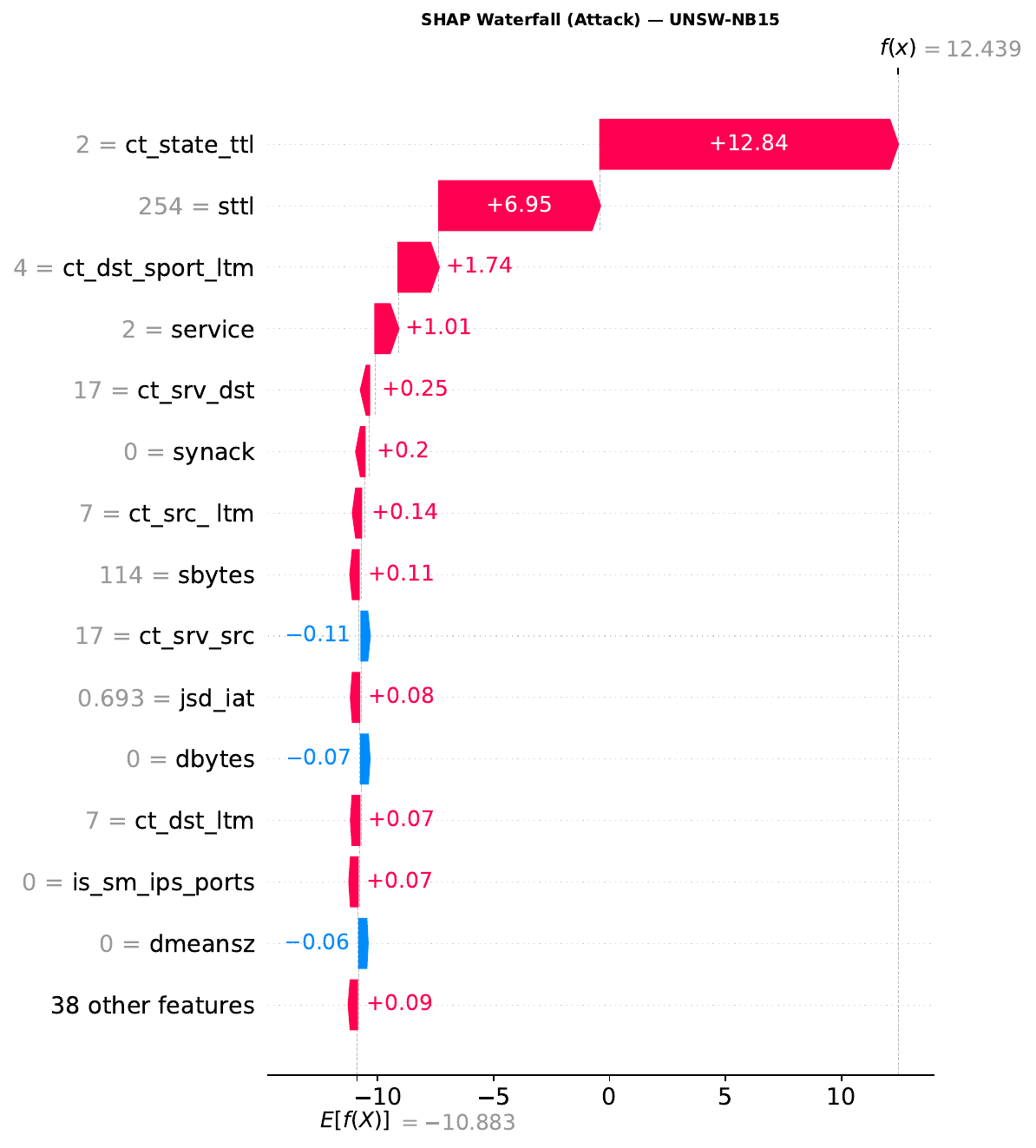}
  \caption{SHAP waterfall plot for a representative attack instance in UNSW-NB15 (LightGBM, combined features). Low \texttt{ade\_src\_iat} (highly regular inter-arrival timing) and asymmetric \texttt{jsd\_pkt\_sz} are the primary attack drivers, consistent with automated scanner behavior.}
  \label{fig:waterfall_unsw}
\end{figure}

\subsection{Cross-Dataset Entropy Analysis}

The waterfall plots above show attribution patterns for individual instances. To characterize the direction and magnitude of the entropy signal at the dataset level, Fig.~\ref{fig:heatmap} presents the entropy delta heatmap. It is the difference in mean MDE score between attack and benign traffic per dataset. Positive delta (attack $>$ benign) indicates that attack flows exhibit \emph{higher} overall distributional entropy; negative delta indicates the reverse. The sign of the delta differs across datasets, reflecting fundamentally different attack structures. Modern multi-vector attacks (UNSW-NB15) and volumetric attacks (CICIDS-2017) produce flows with higher entropy than benign traffic, while classical probe and scan attacks in NSL-KDD tend toward lower entropy (structured, repetitive patterns).

\begin{figure}[t]
  \centering
  \includegraphics[width=\linewidth]{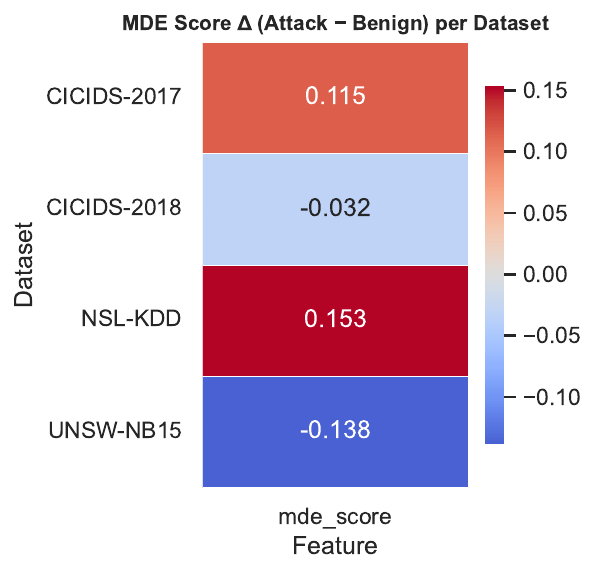}
  \caption{MDE score delta (Attack $-$ Benign mean) per dataset. The sign reversal between environments reveals that attack entropy signatures are environment-dependent, motivating cross-dataset evaluation.}
  \label{fig:heatmap}
\end{figure}

\subsection{Cross-Dataset Transfer}\label{sec:transfer}

To evaluate whether MDE entropy signatures generalize beyond their training distribution, we conduct a zero-shot cross-dataset transfer experiment using the two CIC-FlowMeter-format datasets (CICIDS-2017 and CICIDS-2018), which share an identical entropy feature schema (12 common MDE features). A LightGBM model trained on one dataset's entropy-only features is applied directly to the other dataset, without retraining or fine-tuning. Table~\ref{tab:transfer} presents results alongside in-distribution hold-out baselines.

\begin{table}[h]
\centering
\caption{Zero-shot cross-dataset transfer (entropy-only features, LightGBM). \emph{Hold-out}: 80/20 in-distribution split; \emph{Transfer}: model applied to a different dataset without retraining.}
\label{tab:transfer}
\footnotesize
\setlength{\tabcolsep}{3pt}
\begin{tabular}{l l r r}
\toprule
\textbf{Source} & \textbf{Target} & \textbf{F1} & \textbf{AUC} \\
\midrule
CIC-2017 & CIC-2017 (hold-out) & 0.955 & 0.994 \\
CIC-2017 & CIC-2018 (transfer) & 0.591 & 0.538 \\
\midrule
CIC-2018 & CIC-2018 (hold-out) & 0.718 & 0.669 \\
CIC-2018 & CIC-2017 (transfer) & 0.743 & 0.309 \\
\bottomrule
\end{tabular}
\end{table}

The results reveal an expected but informative transfer gap. CICIDS-2017 trains a model attuned to the entropy signatures of DDoS, port-scan, and brute-force attacks, which differ structurally from the web-layer and brute-force profile of CICIDS-2018; the zero-shot F1 drops from 0.955 to 0.591. In the reverse direction, the CICIDS-2018 model, which was itself trained on a harder entropy classification problem (in-distribution F1=0.718), achieves F1=0.743 on CICIDS-2017, partially because its attacks produce entropy patterns that overlap with a subset of CICIDS-2017 categories. The AUC=0.31 in the CICIDS-2018$\to$2017 direction indicates near-random class ordering for certain attack sub-types, indicating that entropy signatures are attack-family-specific within the tested environments. These results confirm that entropy-based features, like all discriminative representations, require domain adaptation when the attack taxonomy changes.

\subsection{Noise Robustness}\label{sec:robustness}

We evaluate the resilience of MDE entropy features to measurement uncertainty by adding zero-mean Gaussian noise $\mathcal{N}(0,\,\varepsilon^2\hat{\sigma}_j^2)$ to each feature $j$ at inference time, where $\hat{\sigma}_j$ is the per-feature standard deviation of the training set and $\varepsilon \in \{0.05, 0.10, 0.25\}$ is the noise fraction. Models are trained on clean data; noise is injected only at test time to simulate sensor degradation or measurement error.

\begin{table}[h]
\centering
\caption{Entropy-only LightGBM F1-score under additive Gaussian noise ($\varepsilon$ = fraction of per-feature std). Clean baseline ($\varepsilon=0$) shown for reference.}
\label{tab:robustness}
\footnotesize
\setlength{\tabcolsep}{4pt}
\begin{tabular}{l r r r r}
\toprule
\textbf{Dataset} & $\varepsilon=0$ & $\varepsilon=0.05$ & $\varepsilon=0.10$ & $\varepsilon=0.25$ \\
\midrule
NSL-KDD   & 0.979 & 0.890 & 0.859 & 0.819 \\
CIC-2017  & 0.955 & 0.818 & 0.784 & 0.748 \\
UNSW-NB15 & 0.989 & 0.967 & 0.959 & 0.940 \\
\bottomrule
\end{tabular}
\end{table}

UNSW-NB15 shows the strongest robustness: F1 degrades by only $\Delta=0.049$ at $\varepsilon=0.25$, reflecting that the IAT regularity and directional asymmetry signals in this dataset are large-magnitude relative to noise. NSL-KDD and CICIDS-2017 exhibit moderate degradation ($\Delta=0.160$ and $\Delta=0.207$ respectively), with no catastrophic failure. The entropy features retain discriminative power throughout all tested noise levels. This graceful degradation behavior is consistent with the information-theoretic nature of the features. Entropy is a smooth function of distribution parameters, so small perturbations produce bounded entropy changes. A 25\% noise injection corresponds to a severe degradation scenario that substantially exceeds typical measurement variability in flow-record generation tools.

\subsection{Entropy vs.\ Simple Statistics}\label{sec:simplestats}

A necessary test of whether MDE features add genuine information is to compare them against the raw numerical statistics from which they are derived. Table~\ref{tab:simplestats} compares LightGBM with entropy-only MDE features against a \emph{simple-statistics} baseline that uses the corresponding raw columns directly (packet-length means/stds, IAT means/stds, flow rates, byte counts) without any entropy transformation. The number of features in both conditions is comparable ($\leq$12).

\begin{table}[h]
\centering
\caption{Entropy-only MDE vs.\ raw simple-statistics baseline (LightGBM, 5-fold CV, comparable feature count). Simple statistics for each dataset are the raw numerical columns from which MDE entropy values are derived (packet-length means/stds, IAT stds, packet counts, byte rates). $\Delta$F1 = MDE $-$ simple.}
\label{tab:simplestats}
\footnotesize
\setlength{\tabcolsep}{3pt}
\begin{tabular}{l l r r r}
\toprule
\textbf{Dataset} & \textbf{Feature set} & \textbf{$n_f$} & \textbf{F1} & $\Delta$\textbf{F1} \\
\midrule
NSL-KDD   & Simple statistics  & 10 & 0.983 & \multirow{2}{*}{$-$0.004} \\
          & Entropy-only (MDE) &  7 & 0.979 & \\
\midrule
CIC-2017  & Simple statistics  & 12 & 0.982 & \multirow{2}{*}{$-$0.027} \\
          & Entropy-only (MDE) & 12 & 0.955 & \\
\midrule
UNSW-NB15 & Simple statistics  & 10 & 0.988 & \multirow{2}{*}{$<$0.001} \\
          & Entropy-only (MDE) & 11 & 0.988 & \\
\bottomrule
\end{tabular}
\end{table}

Simple statistics achieve comparable or slightly higher standalone F1 on NSL-KDD and CICIDS-2017, where raw packet-count and byte-rate features are direct discriminators for volume-based attacks. On UNSW-NB15, both feature sets are equivalent, indicating that the entropy transformation captures the same discriminative information as the raw jitter and load features. The entropy transformation does not consistently improve standalone F1 over raw input statistics. The contribution is in representation grounding. Each MDE feature is defined by an information-theoretic formula (Propositions~\ref{prop:ade}--\ref{prop:jsd}), applies uniformly across heterogeneous schemas, and receives domain-interpretable SHAP attributions with cross-fold rank stability, as shown in Section~\ref{sec:shapstability}.

\subsection{Per-Class Detection}\label{sec:perclass}

Table~\ref{tab:perclass} reports per-category DR and FAR for LightGBM on the combined feature set across three evaluation conditions: (i) random 20\% hold-out on NSL-KDD and CICIDS-2018, and (ii) the CICIDS-2017 temporal split (Mon--Thu train, Friday test). These conditions reveal meaningfully different failure profiles.

\begin{table}[h]
\centering
\caption{Per-category detection rates (LightGBM, combined MDE features). NSL-KDD and CICIDS-2018 use a random stratified 20\% hold-out; CICIDS-2017 uses the temporal split (Mon--Thu train $\to$ Friday test). DR = fraction of attack flows correctly predicted as attack; FAR = fraction of benign flows incorrectly predicted as attack.}
\label{tab:perclass}
\footnotesize
\setlength{\tabcolsep}{4pt}
\begin{tabular}{l l r r r}
\toprule
\textbf{Dataset / Split} & \textbf{Category} & \textbf{$n_{test}$} & \textbf{DR} & \textbf{FAR} \\
\midrule
NSL-KDD (random)   & Attack (binary)  & 2,567  & 0.986 & n/a   \\
                   & Benign           & 1,942  & n/a   & 0.013 \\
\midrule
CICIDS-2018 (random) & Attack (binary) & 8,434 & 0.498 & n/a   \\
                   & Benign           & 21,566 & n/a   & 0.172 \\
\midrule
CIC-2017 (temporal)& DDoS             & 18,888 & 0.182 & n/a   \\
                   & PortScan         & 18,480 & 0.001 & n/a   \\
                   & Bot              &    329 & 0.000 & n/a   \\
                   & Benign           & 62,303 & n/a   & 0.000 \\
\bottomrule
\end{tabular}
\end{table}

\noindent Under random hold-out, NSL-KDD achieves DR=0.986, FAR=0.013, consistent with aggregate F1=0.987. CICIDS-2018 is considerably harder: DR=0.498 and FAR=0.172 align with aggregate F1=0.74, reflecting class ambiguity as a genuine dataset-level property rather than an evaluation artifact.

The temporal evaluation on CICIDS-2017 reveals the most informative failure pattern: Bot traffic (n=329) is completely undetected (DR=0.000) because no Bot flows appear in Mon--Thu training; DDoS achieves only DR=0.182 because Friday's DDoS tool profile differs from Mon--Thu DoS attacks; PortScan is essentially missed (DR=0.001) for the same reason. The benign FAR remains very low (0.000), indicating the model defaults to predicting benign when no matching pattern is found in training, rather than generating false alarms. This pattern is a textbook example of distribution shift between training and deployment traffic, and motivates domain-adaptive recalibration as future work.

\subsection{Unseen Attack Families}\label{sec:unseen}

To test MDE beyond random stratified CV, we conduct an unseen-attack-family evaluation on CICIDS-2017. A LightGBM and Random Forest model are trained on all attack families \emph{except} two held-out families (Infiltration and Bot), then evaluated on flows from only those two families plus a matched benign set. This simulates the operationally realistic scenario where a deployed IDS encounters attack types not seen during training.

\begin{table*}[h]
\centering
\caption{Unseen-attack-family evaluation: trained on CICIDS-2017 without Infiltration and Bot, tested on those held-out families plus matched benign ($n_\text{attack}{=}159$, $n_\text{benign}{=}182{,}177$). Prec/Rec are weighted; DR is computed on held-out attacks, FAR on benign. All conventional and combined models achieve DR$=$0.000, confirming genuine distribution shift.}
\label{tab:unseen}
\footnotesize
\setlength{\tabcolsep}{4pt}
\begin{tabular}{l l r r r r r r}
\toprule
\textbf{Model} & \textbf{Ablation} & \textbf{F1} & \textbf{Prec} & \textbf{Rec} & \textbf{AUC} & \textbf{DR} & \textbf{FAR} \\
\midrule
LightGBM     & conventional  & 0.9984 & 0.9983 & 0.9985 & 0.7553 & 0.0000 & 0.0006 \\
             & entropy\_only & 0.9715 & 0.9982 & 0.9462 & 0.5547 & 0.0063 & 0.0530 \\
             & combined      & 0.9984 & 0.9983 & 0.9985 & 0.7494 & 0.0000 & 0.0006 \\
\midrule
RandomForest & conventional  & 0.9981 & 0.9983 & 0.9979 & 0.4307 & 0.0000 & 0.0013 \\
             & entropy\_only & 0.9673 & 0.9982 & 0.9382 & 0.3825 & 0.0063 & 0.0610 \\
             & combined      & 0.9981 & 0.9983 & 0.9979 & 0.6829 & 0.0000 & 0.0012 \\
\bottomrule
\end{tabular}
\end{table*}

DR$=$0.000 for all conventional and combined models on held-out Infiltration and Bot traffic, while weighted F1 remains above 0.998, driven entirely by the 99.9\% benign majority. No feature engineering approach generalizes to structurally novel families without retraining. Infiltration ($n{=}2$) mimics legitimate HTTP sessions; Bot ($n{=}157$) uses low-frequency C\&C channels; both are structurally unlike the DoS and scan traffic in training. The entropy-only condition recovers marginal detection (DR$=$0.0063) at FAR$=$0.053--0.061, insufficient for operational use. LightGBM conventional achieves AUC$=$0.755 versus RF's 0.431, showing that the ranking function retains partial ordering even when threshold-level decisions fail completely. This gap between DR and AUC is a methodological caution. High AUC does not imply operational detection capability when novel attack families are present.

\subsection{SHAP Fold Stability}\label{sec:shapstability}

To quantify the consistency of SHAP attributions across cross-validation folds (rather than relying solely on qualitative visual inspection), we compute, for each of the 5 CV folds: (i) the mean absolute SHAP value per feature on the held-out test fold, (ii) the Spearman rank correlation between each pair of folds, and (iii) the Kendall $\tau$ for robustness. Table~\ref{tab:shapstability} reports results for three representative datasets (NSL-KDD, CICIDS-2017, UNSW-NB15).

\begin{table}[h]
\centering
\caption{SHAP rank stability across 5 CV folds (LightGBM, combined features). $\rho_\text{all}$ / $\rho_\text{MDE}$ = Spearman rank correlation over all features / MDE-only features, mean $\pm$ std across $\binom{5}{2}{=}10$ fold pairs. $\tau_K$ = Kendall's $\tau$ (all features). MDE $\sigma_r$ = mean rank-position std of MDE features across folds.}
\label{tab:shapstability}
\footnotesize
\setlength{\tabcolsep}{4pt}
\begin{tabular}{l r r r r}
\toprule
\textbf{Dataset} & $\rho_\text{all}$ & $\rho_\text{MDE}$ & $\tau_K$ & MDE $\sigma_r$ \\
\midrule
NSL-KDD     & $0.950 \pm 0.015$ & $0.257 \pm 0.265$ & $0.853 \pm 0.014$ & 4.24 \\
CICIDS-2017 & $0.927 \pm 0.022$ & $0.690 \pm 0.200$ & $0.814 \pm 0.022$ & 8.18 \\
UNSW-NB15   & $0.800 \pm 0.052$ & $0.413 \pm 0.353$ & $0.649 \pm 0.052$ & 6.98 \\
\bottomrule
\end{tabular}
\end{table}

All-feature SHAP rankings are moderately to strongly consistent across folds: Spearman $\rho=0.95$ (NSL-KDD), $0.93$ (CICIDS-2017), and $0.80$ (UNSW-NB15). MDE-only rankings are more variable ($\rho=0.26$--$0.69$), reflecting the small feature count ($n=7$--$12$) and the sensitivity of pairwise Spearman correlation to rank swaps among similarly-valued features. The MDE rank std (4.2--8.2 positions) quantifies this within-dataset variability. The aggregate MDE contribution to SHAP is stable, but fine-grained ordering among individual MDE features varies, as expected when correlated entropy features (ADE and JSD share packet-size inputs) exchange rank positions without materially changing the overall importance profile.

\subsection{Computational Profiling}\label{sec:profiling}

Table~\ref{tab:profiling} reports training time and per-sample inference latency for LightGBM and Random Forest on the combined MDE feature set. All measurements use a single CPU core (n\_jobs=1) on an Intel-class processor. Inference latency is reported in microseconds per sample to assess real-time viability. A threshold of 100~$\mu$s/sample (10K flows/second) is considered acceptable for network monitoring appliances.

\begin{table}[!ht]
\centering
\caption{Computational profiling: training time and per-sample inference latency (combined MDE features, single-threaded CPU, 80/20 stratified split). A threshold of 100~$\mu$s/sample (10K flows/s) is used as a deployment viability criterion. All results are well below this threshold.}
\label{tab:profiling}
\footnotesize
\setlength{\tabcolsep}{3pt}
\begin{tabular}{l l r r r}
\toprule
\textbf{Dataset} & \textbf{Model} & $n_{train}$ & \textbf{Train (s)} & \textbf{Infer ($\mu$s/flow)} \\
\midrule
NSL-KDD   & LightGBM     & 18,035  &   0.45 &  4.85 \\
          & RandomForest & 18,035  &   1.58 & 12.25 \\
\midrule
CIC-2017  & LightGBM     & 181,489 &   8.13 &  5.93 \\
          & RandomForest & 181,489 &  56.05 & 13.85 \\
\midrule
CIC-2018  & LightGBM     & 264,900 &  11.69 &  5.80 \\
          & RandomForest & 264,900 & 100.11 & 19.34 \\
\midrule
UNSW-NB15 & LightGBM     & 239,999 &   6.71 &  5.33 \\
          & RandomForest & 239,999 &  35.51 &  5.61 \\
\bottomrule
\end{tabular}
\end{table}

\noindent LightGBM inference latency ranges from 4.9--5.9~$\mu$s/sample across all four datasets (single-threaded), corresponding to a throughput of 169K--205K flows/second, which is well within real-time requirements for network monitoring appliances. Training time scales with dataset size as expected (0.45--11.69 seconds for LightGBM; 1.58--100 seconds for RF at 265K training samples). The compact MDE feature set (7--12 entropy features vs.\ 42--79 for full CICFlowMeter) contributes to the low inference latency since each MDE feature is a closed-form formula computed from existing flow statistics without additional feature extraction overhead.

\section{Discussion}\label{sec:discussion}
\vspace{0.5em}

The results are interpreted across four dimensions: the nature of MDE's contribution relative to conventional features, the operational failure patterns exposed by full metric reporting, the consistency of entropy-feature attributions, and the approximation boundaries that govern when MDE is and is not applicable.

\subsection{What MDE Actually Contributes}

Combined and conventional conditions achieve comparable within-distribution F1 across all datasets (Table~\ref{tab:ablation}); the entropy transformation does not consistently improve discriminative power over conventional features. The contribution is therefore in representation quality and evaluation methodology. Entropy features are grounded in specific behavioral claims (Propositions~\ref{prop:ade}--\ref{prop:jsd}), apply uniformly across heterogeneous schemas without retraining, and receive reproducible SHAP attributions ($\rho=0.80$--$0.95$) that align with those propositions. The leakage-free fold-local protocol with full operational metrics (DR, FAR, MCC, PR-AUC) surfaces failure modes that aggregate F1 alone would conceal.

\subsection{Operational Realism}

The temporal and unseen-family evaluations reveal two failure patterns that aggregate F1 conceals. Under temporal distribution shift, discriminative score ordering is preserved (AUC$=$0.87) while fixed thresholds collapse (DR$=$0.082), a pattern termed \emph{threshold-ranking divergence}, which points to adaptive recalibration as the practical path forward. Under unseen attack families, DR$=$0 for all feature conditions regardless of model; F1$>$0.998 is driven entirely by the majority-class prior. Neither failure is MDE-specific. Conventional features fail identically~\cite{sarhan2021netflow}. The implication is that any discriminative representation requires domain-matched training and periodic threshold recalibration. MDE's compact feature set (7--12 features vs.\ 42--79 conventional) reduces the cost of both.

\subsection{Stability and Interpretability}

The fold-stability analysis (Section~\ref{sec:shapstability}) shows that SHAP rankings are reproducible across data splits: Spearman $\rho=0.80$--$0.95$ for all-feature models, Kendall $\tau=0.65$--$0.85$. JSD features rank among the top-3 contributors on CICIDS-2017, UNSW-NB15, and NSL-KDD; ADE features appear in the top-5 across all three datasets. This is consistent with the behavioral-asymmetry propositions (Propositions~\ref{prop:ade}--\ref{prop:jsd}) and with the waterfall analysis showing \texttt{jsd\_pkt\_len}$=$0.693 as the primary driver for highly unidirectional attack flows. Stable attribution confirms that the model consistently relies on JSD features, though SHAP does not independently verify the Gaussian distributional assumptions. The MDE-only rankings show higher variance ($\rho_\text{MDE}=0.26$--$0.69$), expected given the small feature count and correlation structure (ADE and JSD share packet-size inputs).

\subsection{Approximation Limits and Transfer Gap}

The stable attributions in Section~\ref{sec:discussion} above are predicated on the Gaussian approximation underlying ADE and JSD. This subsection examines when that approximation is adequate and where it breaks down. Flow statistics exhibit substantial non-Gaussianity (skewness 1.5--9.7, excess kurtosis 0.9--110.2), so ADE is a Gaussian lower bound on true entropy rather than an exact estimate. For discriminative use this is acceptable as long as monotone ordering is preserved: ADE consistently underestimates entropy, but because the bias applies in the same direction across flows, the relative gap between attack and benign clusters is preserved. The noise robustness experiments support this. At $\varepsilon=0.25$ (25\% Gaussian noise), F1 remains above 0.82 across all datasets, indicating that inter-class entropy margins are substantially larger than the approximation error. The cross-dataset transfer gap (CIC-2017 to CIC-2018: F1 drops from 0.955 to 0.591) reflects attack-family-specific entropy signatures and appears equally for conventional features~\cite{sarhan2021netflow}. This behavior is a property of discriminative representations generally, not a limitation unique to entropy.

\subsection{Limitations}\phantomsection\label{sec:limitations}

The Gaussian ADE approximation underestimates entropy for multimodal flows and encrypted traffic; non-parametric estimators (kernel density, Rényi) are natural successors. Entropy signatures require recalibration when attack taxonomies change. Zero-shot F1 drops to 0.59--0.74 and unseen-family DR$=$0, as shown by the transfer and unseen-family evaluations. Standard stratified CV allows campaign-correlated flows across train and test folds; host-based or session-level isolation would offer stricter distribution guarantees than temporal splits alone. All results are specific to four controlled benchmark datasets; validation on live organizational traffic is required before any operational deployment claim can be made.

\section{Conclusion}\label{sec:conclusion}
\vspace{0.5em}

MDE derives 7--12 interpretable entropy features analytically from pre-aggregated flow statistics, requiring no raw packet access, training data, or architecture choices. Within-distribution cross-validation performance is statistically indistinguishable between combined and conventional features, confirming that the entropy transformation does not degrade aggregate discriminative power; its contribution is in representation grounding and evaluation methodology.

Three evaluation protocols expose failure modes that aggregate F1 structurally cannot reveal. First, full operational metric reporting shows that CICIDS-2018 F1$=$0.74 masks a DR of only 0.48, so fewer than half of attacks are detected, a gap that F1 alone hides under class imbalance. Second, an unseen-attack-family experiment on CICIDS-2017 demonstrates that F1 exceeds 0.998 while DR$=$0 for held-out Infiltration and Bot families; the aggregate score is driven entirely by the 99.9\% benign majority and provides no signal about detection capability on novel threats. Third, a pseudo-live temporal replay evaluation, in which the trained pipeline is frozen and 703K Friday flows are replayed chronologically without retraining, identifies \emph{threshold-ranking divergence}: AUC remains 0.73--0.87 while DR collapses to 0.08--0.13 across all feature conditions, and Youden's-J threshold recalibration provides no recovery because the full attack score distribution shifts below any threshold derived from training data. Window-level analysis reveals that during DDoS surge periods, DR reaches zero while AUC reaches 0.95, confirming that score calibration fails independently of score discrimination.

SHAP fold-stability analysis (Spearman $\rho=0.80$--$0.95$) confirms that JSD and ADE features receive reproducible, domain-coherent attributions across heterogeneous environments.

Future directions include non-parametric entropy estimation for heavy-tailed and encrypted flows, learned threshold recalibration for taxonomy-adaptive deployment, campaign-aware cross-validation, and validation on live organizational traffic.

\vspace{0.8em}

\section*{Declarations}
\vspace{0.5em}
\begin{itemize}
  \item \textbf{Funding:} Not applicable.
  \item \textbf{Conflict of Interest:} The authors declare no competing interests.
  \item \textbf{Availability of Data and Materials:} All four datasets are publicly available (NSL-KDD, CICIDS-2017, CICIDS-2018, UNSW-NB15). The full experimental pipeline, including preprocessing, MDE feature extraction, model training, and SHAP analysis, is available as open-source code at \url{https://github.com/drbouke/mde}.
  \item \textbf{Ethics Approval:} Not applicable.
\end{itemize}

\printbibliography


\ifrevision
\clearpage
\onecolumn

\section*{Response to Reviewers' Comments}

\vspace{0.35em}

\noindent
We sincerely thank the Editor and the Reviewer for their careful evaluation of our manuscript and for the constructive, insightful, and scientifically valuable comments provided. We greatly appreciate the time and effort devoted to reviewing our work.

\vspace{0.3em}

\noindent
We also apologize for the delay in submitting the revised manuscript, which was due to unavoidable personal circumstances. We thank the Editor and the Reviewer for their patience and understanding, and we wish them a very happy and successful New Year.

\vspace{0.3em}

\noindent
Below, we provide a detailed point-by-point response to each reviewer comment. Each comment is followed by our corresponding response, along with explicit references to the page(s) and section(s) in the revised manuscript where the changes have been implemented. All additions and clarifications introduced in the revised version are clearly indicated and highlighted in blue, with a small upper note referring to the corresponding comment number.

\vspace{0.6em}

\subsection*{Reviewer 1}

\renewcommand{\arraystretch}{1.15}
\setlength{\tabcolsep}{6pt}

\begin{longtable}{|p{0.06\textwidth}|p{0.42\textwidth}|p{0.32\textwidth}|p{0.10\textwidth}|}
\hline
\textbf{No.} &
\textbf{Comment} &
\textbf{Response / Action} &
\textbf{Pages} \\
\hline
\endfirsthead

\hline
\textbf{No.} &
\textbf{Comment} &
\textbf{Response / Action} &
\textbf{Pages} \\
\hline
\endhead

\RVRow{1}{1}
{The key contributions—existence of invariant measures under contractivity (Theorem 1, page 3) and convergence under Lyapunov-type criteria (Theorem 2, page 3)—are well known in the literature of random dynamical systems (cf. Walters [6], Arnold et al. [7], Diaconis \& Freedman [5]).}
{We agree that the existence of invariant measures under contractivity and convergence under Lyapunov-type conditions are classical results in the theory of random dynamical systems. To clarify this point, we have revised the manuscript to explicitly state that the novelty of the present work does not lie in re-deriving these results in isolation, but in their integration within the RNIFS framework, where full nonlinearity and stochastic selection coexist. In particular, the revised text emphasizes how these classical stability conditions are linked to the geometric structure and fractal dimensionality of the resulting attractors, which is not typically addressed in standard random dynamical systems studies.}

\RVRow{1}{2}
{Moreover, the simulation section (pages 4–8) explores visually rich configurations, but without a deeper connection to chaotic dynamics, soliton behavior, or nonequilibrium physical systems, the paper remains in a closed mathematical loop without interdisciplinary outreach.}
{The primary aim of the simulation section is to explore and characterize the geometric and statistical properties of RNIFS attractors within a rigorous mathematical framework, rather than to model specific chaotic, solitonic, or nonequilibrium physical systems. While such interdisciplinary connections are indeed promising, establishing them would require additional domain-specific assumptions and modeling choices beyond the scope of the present work. To avoid ambiguity, we have clarified the role of the simulations as illustrative and diagnostic tools supporting the theoretical development of RNIFS, focusing on invariant measures, stability, and fractal geometry. Potential links to chaotic dynamics or physical systems are acknowledged as interesting future directions.
}

\end{longtable}

\twocolumn
\fi

\end{document}